\newtheorem{theorem}{Theorem}[section]
\newtheorem{lemma}[theorem]{Lemma}
\newtheorem{proposition}[theorem]{Proposition}
\newtheorem{corollary}[theorem]{Corollary}
\theoremstyle{definition}
\newtheorem{definition}[theorem]{Definition}
\theoremstyle{remark}
\newtheorem{remark}[theorem]{Remark}
\def\e{{\mathrm{e}}}
\def\Spec{{\mathrm{Spec}}}
\def\bA{{\mathbb A}}
\def\bC{{\mathbb C}}
\def\bH{{\mathbb H}}
\def\bM{{\mathbb M}}
\def\bP{{\mathbb P}}
\def\bQ{{\mathbb Q}}
\def\bR{{\mathbb R}}
\def\bZ{{\mathbb Z}}
\def\Q{{\mathbb Q}}
\def\Z{{\mathbb Z}}
\def\R{{\mathbb R}}
\def\cA{{\mathcal A}}
\def\cB{{\mathcal B}}
\def\cE{{\mathcal E}}
\def\cF{{\mathcal F}}
\def\cH{{\mathcal H}}
\def\cL{{\mathcal L}}
\def\cP{{\mathcal P}}
\def\cT{{\mathcal T}}
\def\cV{{\mathcal V}}
\def\cX{{\mathcal X}}
\def\cY{{\mathcal Y}}
\def\Aut{{\rm Aut}}
\def\d{{\rm d}}
\def\Gal{{\rm Gal}}
\def\GL{{\rm GL}}
\def\Hom{{\rm Hom}}
\def\Mod{{\rm Mod}}
\def\PGL{{\rm PGL}}
\def\PSL{{\rm PSL}}
\def\SL{{\rm SL}}
\def\Spec{{\rm Spec}}
\begin{document}

\title {Noncommutative geometry of groups like $\Gamma_{0}(N)$}
\author{Jorge Plazas}

\address{Concordia University - Department of Mathematics and Statistics - Montreal, QC, Canada}
\email{jorge.plazas@concordia.ca}

\subjclass[2010]{Primary 58B34; Secondary 11F06}

\keywords{Commensurability of $\bQ$-lattices, congruence subgroups}

\date{March 14, 2013 (Version v.1.0)}
\maketitle


\begin{abstract}
We show that the Connes-Marcolli $GL_{2}$-system can be represented on the Big Picture, a combinatorial gadget introduced by Conway in order to understand various results about congruence subgroups pictorially. In this representation  the time evolution of the system is implemented by Conway's distance between projective classes of commensurable lattices. We exploit these results in order to associate quantum statistical mechanical systems to congruence subgroups in a way appropriate for the study of monstrous moonshine.

\end{abstract}



\section{Introduction}
\label{section1}

\vspace{0.3 cm} \noindent The term \emph{monstrous moonshine} refers to the intriguing relation between sporadic simple groups and the arithmetic of automorphic functions.
In the late seventies observations of McKay and Thompson led Conway and Norton to the formulation of the monstrous moonshine conjectures \cite{ConwayNorton}. Given a congruence class $\langle m \rangle$ of elements in the monster group $\bM$ (the largest amongst the 26 sporadic simple groups) Thompson associates to it the generating series
\begin{eqnarray}
\label{hauptmoduli1}
  F_{\langle m \rangle}(q) &=& \sum_{j \in \bZ} h_{j}(m) \, q^{j} \\
   \nonumber 
  &=& q^{-1} +  h_{1}(m) q +  h_{2}(m) q^{2} + \dots
\end{eqnarray}
where the $h_{j}(m)$ are the character values at $m$ of particular representations of $\bM$ (its head representations). According to the main conjecture of Conway and Norton, proved by Borcherds in \cite{Borcherds}, to every $\langle m \rangle$ we can associate a congruence subgroup $\Gamma_{\langle m \rangle} \subset \GL_{2}^{+}(\bR)$ such that the field of $\Gamma_{\langle m \rangle}$-automorphic functions is generated by
\begin{eqnarray*}
f_{\langle m \rangle}(z)  &=& F_{\langle m \rangle}( \e^ {2\pi i z}).
\end{eqnarray*}

\vspace{0.3 cm} \noindent  Borcherds' proof uses techniques from string theory which enable him to show that the series $F_{\langle m \rangle}(q)$ satisfy enough identities to be uniquely determined by a finite number of its coefficients. Checking that each $F_{\langle m \rangle}(q)$ actually coincides with the corresponding automorphic function proposed by Conway and Norton completes the proof. A conceptual explanation of this fact is nevertheless still lacking and would constitute a great improvement of our understanding about sporadic groups and their relationship to automorphic functions.

\vspace{0.3 cm} \noindent The groups  $\Gamma_{\langle m \rangle}$ are commensurable with $\Gamma_{1} = \SL_{2}(\bZ)$ and as such, when viewed as groups of fractional linear transformations of the complex upper half plane $\mathfrak{H}$, they have the same set of cusps $\bQ\cup\{i \infty\}$. The moonshine theorem can then be restated by saying that for each $m\in\bM$ the Riemann surface $ \Gamma_{\langle m \rangle}   \diagdown \widetilde{\mathfrak{H}} =    \Gamma_{\langle m \rangle}   \diagdown  \left( \mathfrak{H} \cup \bQ\cup\{i \infty\} \right) $ is a genus zero Riemann surface whose field of functions is generated by $f_{\langle m \rangle}(z)$ i.e. $f_{\langle m \rangle}(z)$ is a \emph{principal modulus} for the \emph{genus zero} group $\Gamma_{\langle m \rangle}$.

\vspace{0.3 cm} \noindent Each of the groups $ \Gamma_{\langle m \rangle}$ occurring in moonshine contains some $\Gamma_{0}(N)$ to finite index. If we denote by $\overline{\Gamma}$ the image in $\PGL_{2}^{+}(\bR)$ of a subgroup $\Gamma \subset \GL_{2}^{+}(\bR)$ then  $\overline{ \Gamma}_{\langle m \rangle}$ is contained in the normalizer of $\overline{ \Gamma}_{0}(N)$ in $\PGL_{2}^{+}(\bR)$. We would like to understand better the automorphic function theory of these and similar groups in a way that might shed light into the monstrous moonshine phenomenon.

\vspace{0.3 cm} \noindent In his article \emph{Understanding groups like {$\Gamma_0(N)$}} \cite{Conway} Conway introduced a combinatorial gadget that simplifies the analysis of groups related to $\Gamma_{0}(N)$ making various facts about their structure visually apparent. Conway's main idea is to view $\PSL_{2}(\bZ)$ as the automorphism group of a fixed lattice $\Lambda_{1} \subset \bC$ and to study groups commensurable with $\PSL_{2}(\bZ)$ in terms of lattices commensurable with $\Lambda_{1}$. The big picture $\cB$ is a graph whose vertices are given by projective classes of lattices commensurable with $\Lambda_{1}$ whose edge structure is defined in terms of the natural distance between these classes. Using the big picture Conway gives beautiful simple proofs of the index formula for Hecke operators, the Atkin-Lehner theorem describing the normalizer $N_{\PSL_{2}(\bR)}(\overline{ \Gamma}_{0}(N))$ and Helling's theorem on maximal subgroups of $\PSL_{2}(\bC)$ commensurable with $\PSL_{2}(\bZ)$.

\vspace{0.3 cm} \noindent Considering lattices, together with a labeling of the torsion points in the corresponding elliptic curve, leads to the notion of $\bQ$-lattices introduced by Connes and Marcolli in \cite{ConnesMarcolli1} where spaces of commensurability classes of these objects are studied. From the point of view of topology the space of commensurability classes of $\bQ$-lattices is a degenerate quotient,  from the standpoint of noncommutative geometry however this quotient makes perfect sense and defines a noncommutative space with a rich structure reflecting the arithmetic of modular functions. The algebra of continuous functions on the noncommutative space of commensurability classes of $\bQ$-lattices is by definition the $C^{*}$-algebra $\cA$ corresponding to the equivalence relation given by commensurability. The $C^{*}$-algebra $\cA$ comes equipped with a natural uniparametric group of automorphisms $\sigma_{t}$ defining a quantum statistical mechanical system. Connes and Marcolli show that the symmetries and equilibrium states of the system $(\cA, \sigma_{t})$ encode the arithmetic of abelian extensions of the modular field $\cF$ (the field of modular functions of arbitrary level).

\vspace{0.3 cm} \noindent In this article we show that the Connes-Marcolli system can be represented on the big picture $\cB$. Using this representation we interpret Conway's techniques in the context of algebras of operators. Each congruence subgroup $\Gamma$ of the type mentioned above singles out an algebra of arithmetic observables of the Connes-Marcolli system. We show that the arithmetic of automorphic functions of groups related to $\Gamma_{0}(N)$ is encoded in terms of the system $(\cA, \sigma_{t})$ via this algebra. Our main motivation comes from potential applications to monstrous moonshine.

\vspace{0.3 cm} \noindent I thank John McKay for many useful discussions leading to this work. The present work has been supported by the Granada Excellence Network of Innovation Laboratories GENIL.


\vspace{0.3 cm}
\section{Commensurability classes of lattices and Conway's big picture}
\label{Section2}

\vspace{0.3 cm} \noindent  In this section we recall the definition of the big picture introduced by Conway in \cite{Conway}. We also look at the local structure of the big picture and relate it to the Bruhat-Tits buildings of $\GL_{2}$ over the local fields $\bQ_{p}$.

\subsection{Generalities}
\label{Section2.1}

\vspace{0.2 cm} \noindent By a \emph{lattice} we mean a lattice in $\bC$ i.e. a discrete co-compact subgroup $\Lambda$ of $\bC$. We recall the following definitions:

\begin{itemize}
  \item Two lattices $\Lambda$ and $\Lambda'$ are \emph{commensurable} if they are commensurable as groups i.e. if their intersection $\Lambda\cap \Lambda'$ has finite rank in both $\Lambda$ and $\Lambda'$.
  \item Two lattices $\Lambda$ and $\Lambda'$ are \emph{projectively equivalent} if there exists a scalar $\alpha \in \bQ^{*}$ such that $\Lambda = \alpha \Lambda'$.
  \item Two lattices $\Lambda$ and $\Lambda'$ are \emph{homothetic} or \emph{equivalent up to scale} if there exists a scalar $ \lambda \in \bC^{*}$ such that $\Lambda = \lambda \Lambda'$.
\end{itemize}

\vspace{0.3 cm} \noindent  Thus projectively equivalent lattices are homothetic and any lattice $\Lambda$ is equivalent up to scale to a lattice of the form $\Lambda_{z} = z\bZ+\bZ$ for a complex number $z$ in the upper half-plane $\mathfrak{H}$. When passing to homothety classes the $GL_{2}^{+}(\bR)$-action on lattices corresponds to the action of $GL_{2}^{+}(\bR)$ on $\mathfrak{H}$ by fractional linear transformations
\begin{eqnarray*}
g z &=  & \frac{az +b}{cz + d},  \qquad  g = \left(\begin{array}{cc} a & b\\c & d \end{array}\right)  \in GL_{2}^{+}(\bR), \, z\in  \mathfrak{H}
\end{eqnarray*}
We identify  $PGL_{2}^{+}(\bR) \simeq PSL_{2}(\bR)$ with the group of automorphisms of $\mathfrak{H}$ via this action. Under this identification the set of lattices up to scale corresponds to the quotient $\PSL_{2}(\bZ)  \diagdown   \mathfrak{H}$.

\vspace{0.3 cm} \noindent Note also that if $\Lambda$ and $\Lambda'$ are commensurable lattices then they are homothetic if and only if they are projectively equivalent and that commensurability is well defined at the level of projective classes of lattices. We denote by $\cL$ the set of all lattices in $\bC$. Given a lattice $\Lambda_{1}$ we denote by $\cL_{\Lambda_{1}}$ the set of all lattices commensurable with $\Lambda_{1}$ and by $\cP\cL_{\Lambda_{1}}$ the set of all projective classes of lattices commensurable with the class of $\Lambda_{1}$. By the previous remark  $\cP\cL_{\Lambda_{1}}$ coincides with  the set of all homothety classes of lattices commensurable with the class of $\Lambda_{1}$. 

\vspace{0.3 cm} 

\subsection{Conway's hyperdistance and the big picture} 
\label{Section2.2}

 Following Conway  \cite{Conway} we introduce a metric on the set $\cP\cL_{\Lambda_{1}}$:

\begin{definition}
Given two commensurable lattices $\Lambda$ and $\Lambda'$ the \emph{hyperdistance} between them is the smallest positive integer $N$ for which there exist scalars $\alpha, \alpha' \in \bQ^{*}$ such that
\begin{eqnarray*}
 \left[ \Lambda : \alpha' \Lambda' \right]  &= \,  N  \,  = & \left[ \Lambda' : \alpha \Lambda \right]
\end{eqnarray*}
\end{definition}

\vspace{0.3 cm} \noindent  We denote the hyperdistance between two lattices $\Lambda$ and $\Lambda'$ by $\delta(\Lambda, \Lambda')$. The function $\delta$ is symmetric and well defined at the level of projective classes. Two commensurable lattices $\Lambda$ and $\Lambda'$ are projectively equivalent if and only if $\delta(\Lambda, \Lambda') = 1$. The logarithm of the hyperdistance
\begin{eqnarray*}
\d (\Lambda, \Lambda') &= & \log \left( \delta(\Lambda, \Lambda') \right)
\end{eqnarray*}
defines a metric on the set $\cP\cL_{\Lambda_{1}}$ of projective classes of lattices commensurable to a fixed lattice $\Lambda_{1}$.

\vspace{0.3 cm} \noindent  Two lattices $\Lambda$ and $\Lambda'$ are commensurable if and only if $\bQ \Lambda = \bQ \Lambda'$. Once we have chosen a basis
$\{\omega_{1}, \omega_{2} \}$ for a lattice $\Lambda_{1}$ we can express commensurability in terms of elements of $GL_{2}^{+}(\bQ)$. Identifying $PSL_{2}(\bZ)$ with the automorphism group of a lattice we have the following:

\begin{remark}
\label{basisrmrk}
Given a basis $\{\omega_{1}, \omega_{2} \}$ for a  lattice $\Lambda_{1}$ we can identify the  quotient $ SL_{2}(\bZ)   \diagdown   GL_{2}^{+}(\bQ)$ (resp. $PSL_{2}(\bZ) \diagdown  \PGL_{2}^{+}(\bQ)$) with the set $\cL_{\Lambda_{1}}$ (resp. $\cP\cL_{\Lambda_{1}}$) via
 \begin{eqnarray*}
\left(\begin{array}{cc} a & b \\c & d\end{array}\right) & \longmapsto &  (a \omega_{1} + b  \omega_{2})\Z  + (c \omega_{1} + d  \omega_{2}) \Z
\end{eqnarray*}
\end{remark}

\vspace{0.3 cm} \noindent  In what follows it will be convenient to interpret the hyperdistance in terms of matrices (see \cite{Duncan} for a similar approach). Given $g \in GL_{2}^{+}(\bQ)$ we let $\alpha_{g}$ be the smallest positive rational number such that
 \begin{eqnarray*}
 \alpha_{g} g \in M_{2}(\bZ)
\end{eqnarray*}
Define now:
 \begin{eqnarray*}
\delta_{1}(g) &=& \det (  \alpha_{g} g )\\
&=&  \alpha_{g}^{2}  \det ( g )
 \end{eqnarray*}

\vspace{0.3 cm} \noindent The main properties of the map $\delta_{1}$ are summarized in the following
\begin{lemma}
Let $g \in GL_{2}^{+}(\bQ)$ , then
\begin{itemize}
  \item $\delta_{1} (\alpha g) =\delta_{1} (g)$ for all $\alpha \in  \bQ^{*}$
  \item $\delta_{1}(\gamma g) =\delta_{1} (g) =  \delta_{1}(g \gamma)$ for all $\gamma \in SL_{2}(\bZ)$
\end{itemize}
In particular $\delta_{1}$ descends to a map:
 \begin{eqnarray*}
\delta_{1} : \;    \PSL_{2}(\bZ)  \diagdown   \PGL_{2}^{+}(\bQ)& \longrightarrow &\bZ_{> 0}
 \end{eqnarray*}
\end{lemma}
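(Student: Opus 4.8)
The plan is to reduce everything to one structural observation: for $g \in GL_{2}^{+}(\bQ)$ the set
\[
I_{g} \;=\; \{\, r \in \bQ \;:\; r g \in M_{2}(\bZ) \,\}
\]
is an infinite cyclic subgroup of $(\bQ,+)$ whose least positive element is precisely $\alpha_{g}$, i.e.\ $I_{g} = \alpha_{g}\bZ$. First I would check that $I_{g}$ is closed under addition and negation, hence a subgroup. Clearing denominators shows $n \in I_{g}$ for some positive integer $n$, so $I_{g} \neq 0$. On the other hand, choosing a positive integer $m$ with $m g^{-1} \in M_{2}(\bZ)$ (possible since $g \in GL_{2}(\bQ)$), any $r \in I_{g}$ satisfies $r m\, \id = (rg)(m g^{-1}) \in M_{2}(\bZ)$, so $r \in \tfrac{1}{m}\bZ$. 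Thus $I_{g}$ is a nonzero subgroup of the infinite cyclic group $\tfrac1m\bZ$, hence itself infinite cyclic, generated by its smallest positive element, which by definition is $\alpha_{g}$. In passing this also records that $\delta_{1}(g) = \det(\alpha_{g} g)$ is a strictly positive integer, since $\alpha_{g} g \in M_{2}(\bZ)$ while $\det(\alpha_{g} g) = \alpha_{g}^{2}\det(g) > 0$; this is what makes the target $\bZ_{>0}$ legitimate.

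Next I would read off how $\alpha_{g}$ transforms. For a scalar $\alpha \in \bQ^{*}$ one has $\det(\alpha g) = \alpha^{2}\det(g) > 0$, so $\alpha g$ again lies in $GL_{2}^{+}(\bQ)$ and $\delta_{1}(\alpha g)$ makes sense even when $\alpha < 0$; moreover $r \in I_{\alpha g}$ iff $r\alpha \in I_{g} = \alpha_{g}\bZ$, i.e.\ iff $r \in (\alpha_{g}/\alpha)\bZ = (\alpha_{g}/\abs{\alpha})\bZ$ as a subset of $\bQ$, whence $\alpha_{\alpha g} = \alpha_{g}/\abs{\alpha}$. For $\gamma \in SL_{2}(\bZ)$ --- in fact for any $\gamma \in GL_{2}(\bZ)$ --- left and right multiplication by $\gamma$ preserve $M_{2}(\bZ)$ because $\gamma$ and $\gamma^{-1}$ have integer entries; hence $rg \in M_{2}(\bZ) \iff \gamma r g \in M_{2}(\bZ) \iff r g \gamma \in M_{2}(\bZ)$, so $I_{\gamma g} = I_{g} = I_{g\gamma}$ and therefore $\alpha_{\gamma g} = \alpha_{g} = \alpha_{g\gamma}$.

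Now the two bullet points follow by plugging into $\delta_{1}(g) = \alpha_{g}^{2}\det(g)$ and using multiplicativity of the determinant: $\delta_{1}(\alpha g) = (\alpha_{g}/\abs{\alpha})^{2}\,\alpha^{2}\det(g) = \alpha_{g}^{2}\det(g) = \delta_{1}(g)$, and $\delta_{1}(\gamma g) = \alpha_{g}^{2}\det(\gamma g) = \alpha_{g}^{2}\det(g) = \delta_{1}(g)$, and likewise $\delta_{1}(g\gamma) = \delta_{1}(g)$. Finally, for the descent I would note that $\PGL_{2}^{+}(\bQ)$ is the quotient of $GL_{2}^{+}(\bQ)$ by the scalar subgroup $\{\alpha\,\id : \alpha \in \bQ^{*}\}$ (all scalars have positive determinant $\alpha^{2}$) and $\PSL_{2}(\bZ) = SL_{2}(\bZ)/\{\pm\id\}$, so two matrices of $GL_{2}^{+}(\bQ)$ define the same element of $\PSL_{2}(\bZ)\diagdown\PGL_{2}^{+}(\bQ)$ exactly when they differ by left multiplication by an element of $SL_{2}(\bZ)$ together with multiplication by a scalar in $\bQ^{*}$; the invariances just proved show that $\delta_{1}$ is constant on such orbits, hence descends to $\delta_{1}\colon \PSL_{2}(\bZ)\diagdown\PGL_{2}^{+}(\bQ) \to \bZ_{>0}$. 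The only non-formal point is the structural lemma $I_{g} = \alpha_{g}\bZ$; once one knows that the relevant set of scalars is a full subgroup of $\bQ$ with bounded denominators (not merely a submonoid), cyclicity and the identification of the generator with $\alpha_{g}$ are immediate, and everything else is bookkeeping.
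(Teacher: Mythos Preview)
The paper states this lemma without proof, treating the invariance properties as routine; there is no argument to compare against. Your proof is correct and complete. The structural observation that $I_{g}=\{r\in\bQ:rg\in M_{2}(\bZ)\}$ is an infinite cyclic subgroup of $(\bQ,+)$ with generator $\alpha_{g}$ is exactly the right way to make the transformation laws $\alpha_{\alpha g}=\alpha_{g}/\lvert\alpha\rvert$ and $\alpha_{\gamma g}=\alpha_{g}=\alpha_{g\gamma}$ transparent, and the bounded-denominator trick via $(rg)(mg^{-1})=rm\cdot\id$ is a clean way to rule out non-discrete subgroups. You also correctly handle the sign issue for negative scalars and verify that the target really is $\bZ_{>0}$, a point the paper leaves implicit.
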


\vspace{0.3 cm} \noindent If we denote by $ \bar{g}  $ the class in $\PGL_{2}^{+}(\bQ) $ of an element $g \in \GL_{2}^{+}(\bQ)$
 then we can express the map $\delta_{1}$ as:
 \begin{eqnarray*}
\delta_{1}(\bar{g} ) &=& \det (s (\bar{g}) )
 \end{eqnarray*}
where
 \begin{eqnarray}
 \label{eqsection}
 s : PGL_{2}^{+}(\bQ) & \longrightarrow & M_{2}^{+}(\bZ) \\ \nonumber
  \bar{g}  & \longmapsto & \alpha_{g} g
 \end{eqnarray}
is the section of $\PGL_{2}^{+}(\bQ) =    \bQ^{*} \diagdown  \GL_{2}^{+}(\bQ) $ with image in $ \mathrm{M}_{2}^{+}(\bZ) = \GL_{2}^{+}(\bQ)\cap  M_{2}(\bZ)$ obtained by sending an element of $\PGL_{2}^{+}(\bQ)$ to its representative with integer coefficients in its lowest terms.
We will denote the map obtained by passing to quotients by the same letter:
 \begin{eqnarray*}
 s : \;      \PSL_{2}(\bZ)  \diagdown   \PGL_{2}^{+}(\bQ) & \longrightarrow & \SL_{2}(\bZ)  \diagdown    M_{2}^{+}(\bZ) \, .
 \end{eqnarray*}

\vspace{0.3 cm} \noindent Taking into account Remark~\ref{basisrmrk} the map $\delta_{1}$ is an integer valued function on the set of all lattices commensurable with a given lattice:

\begin{lemma}
\label{hypermetric}
Fix a lattice $\Lambda_{1}$ in $\bC$. Choosing a basis of $\Lambda_{1}$ identify $\cP\cL_{\Lambda_{1}}$ with $ \PSL_{2}(\bZ)  \diagdown   \PGL_{2}^{+}(\bQ)$. Given two lattices $\Lambda$ and $\Lambda'$ corresponding to matrices $ g $ and $h $ their hyperdistance is given by:
 \begin{eqnarray*}
\delta(  \Lambda  ,  \Lambda'  ) &=& \delta_{1}( g h^{-1}).
 \end{eqnarray*}
Taking $\d_{1} = \log\delta_{1}$  we can express the distance between commensurable projective classes of lattices as
 \begin{eqnarray*}
\d :   \cP\cL_{\Lambda_{1}} \times \cP\cL_{\Lambda_{1}}  & \longrightarrow & \R _{\geq 0} \\
(\nu \, , \,  \nu' )\,  & \longrightarrow &  \d_{1} ( g h^{-1}).
 \end{eqnarray*}
where $\nu$ and $\nu'$ denote respectively the projective classes of the lattices $\Lambda$ and $\Lambda'$.

\end{lemma}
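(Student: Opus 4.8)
The plan is to reduce the statement to the earlier computation of $\delta_1$ by carefully tracking through the identification of Remark~\ref{basisrmrk}. First I would fix a basis $\{\omega_1,\omega_2\}$ of $\Lambda_1$ and recall that under Remark~\ref{basisrmrk} a matrix $g=\left(\begin{smallmatrix} a & b\\ c & d\end{smallmatrix}\right)\in \GL_2^+(\bQ)$ corresponds to the lattice $\Lambda=(a\omega_1+b\omega_2)\bZ+(c\omega_1+d\omega_2)\bZ$, and similarly $h$ corresponds to $\Lambda'$. The key observation is that $g$ being the ``coordinate matrix'' of a basis of $\Lambda$ relative to $\{\omega_1,\omega_2\}$ means exactly that $\Lambda=g\Lambda_1$ in the obvious sense (writing lattice elements as column vectors in the $\omega$-coordinates), and likewise $\Lambda'=h\Lambda_1$; hence $\Lambda=gh^{-1}\Lambda'$.

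Next I would unwind the definition of the hyperdistance. By Definition, $\delta(\Lambda,\Lambda')$ is the least positive integer $N$ for which there exist $\alpha,\alpha'\in\bQ^*$ with $[\Lambda:\alpha'\Lambda']=N=[\Lambda':\alpha\Lambda]$. Since $\Lambda=gh^{-1}\Lambda'$, after choosing the basis we may identify $\Lambda'$ with $\bZ^2$ and $\Lambda$ with $(gh^{-1})\bZ^2$; scaling $gh^{-1}$ by a rational puts us in the situation of the section $s$ from \eqref{eqsection}, i.e. we may replace $gh^{-1}$ by its primitive integral representative $m:=s(\overline{gh^{-1}})=\alpha_{gh^{-1}}\,gh^{-1}\in M_2^+(\bZ)$. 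Then $[\bZ^2:m\bZ^2]=\det m=\delta_1(gh^{-1})$, while the existence of a matching index in the other direction comes from the adjugate: $m\cdot\mathrm{adj}(m)=(\det m)\,I$, so $(\det m)\bZ^2\subset m\bZ^2\subset\bZ^2$ gives $[m\bZ^2:(\det m)\bZ^2]=\det m$ as well. The minimality built into the definition of $\alpha_{gh^{-1}}$ (``lowest terms'') matches the minimality of $N$ in the Definition, because any other choice of scalars $\alpha,\alpha'$ differs from the primitive one by a common integer factor which only enlarges both indices by its square. This yields $\delta(\Lambda,\Lambda')=\delta_1(gh^{-1})$, and then $\d(\Lambda,\Lambda')=\log\delta(\Lambda,\Lambda')=\log\delta_1(gh^{-1})=\d_1(gh^{-1})$ by definition of $\d_1$.

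Finally I would check well-definedness at the level of projective classes $\nu,\nu'$: changing the basis of $\Lambda_1$ or the chosen matrix representatives replaces $g\mapsto \gamma g$ and $h\mapsto\gamma' h$ with $\gamma,\gamma'\in\SL_2(\bZ)$ and allows rescaling by $\bQ^*$, so $gh^{-1}$ changes to $\gamma (gh^{-1})\gamma'^{-1}$ up to a rational scalar; the preceding Lemma on $\delta_1$ (invariance under $\alpha\in\bQ^*$ and under left/right multiplication by $\SL_2(\bZ)$) shows $\delta_1(gh^{-1})$ is unchanged, so the formula descends to $\cP\cL_{\Lambda_1}\times\cP\cL_{\Lambda_1}$ as claimed.

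I expect the main obstacle to be the bookkeeping in the second paragraph: making the passage ``$\Lambda=gh^{-1}\Lambda'$, rescale to integral, read off the index as a determinant'' fully rigorous requires being careful that the two minimality conditions — the one defining $\alpha_g$ (primitive integral representative) and the one defining $N$ in the hyperdistance — genuinely coincide, and that the symmetric index equality holds with the \emph{same} $N$ rather than merely up to a factor. The adjugate argument handles the symmetry cleanly, but one should also verify that no smaller $N$ can arise from an asymmetric choice of $\alpha$ and $\alpha'$, which is where the ``lowest terms'' normalization does the real work.
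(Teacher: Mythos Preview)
The paper states this lemma without proof (it is followed immediately by the definition of the Big Picture), so there is no argument in the paper to compare against. Your proposal is correct: the identification $\Lambda=gh^{-1}\Lambda'$, the passage to the primitive integral representative $m=s(\overline{gh^{-1}})$, and the adjugate trick giving matching indices $\det m$ in both directions are exactly what is needed, and the invariance lemma for $\delta_1$ handles well-definedness on projective classes. The only place to tighten is the minimality step you flag yourself: it is cleanest to observe that for a $2\times2$ matrix $\mathrm{adj}(m)$ has the same set of entries (up to sign) as $m$, hence is primitive whenever $m$ is, so $\delta_1(k^{-1})=\det(\mathrm{adj}(m))=\det m=\delta_1(k)$; this shows the \emph{minimal} index achievable in each direction separately is already $\det m$, which forces $N=\det m$ without any appeal to ``common integer factors''.
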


\begin{definition}
Fix a lattice $\Lambda_{1}$ in $\bC$. The \emph{Big Picture} is the graph $\cB$ whose set of vertices
 \begin{eqnarray*}
\cV_{\cB} &=&  \cP\cL_{\Lambda_{1}}
 \end{eqnarray*}
corresponds to the set of projective classes of lattices commensurable with $\Lambda_{1}$ with two vertices $\nu$ and $\nu'$ being joined by an edge if and only if the hyperdistance $\delta(  \Lambda  ,  \Lambda')$  between the corresponding lattices is a prime number.
\end{definition}


\subsection{Local structure of the big picture} 
\label{Section2.3}

\vspace{0.3 cm} \noindent For each prime number $p$ let $\cT_{p}$ be the subgraph of $\cB$ whose vertices are given by projective classes of lattices whose hyperdistance to
$\Lambda_{1}$ is a power of $p$. We have the following decomposition of  $\cB$ in terms of the  $\cT_{p}$ (cf. \cite{Conway})

\begin{lemma}
\begin{enumerate}
  \item For each prime number $p$ the graph  $\cT_{p}$ is a $(p+1)$-valent tree.
  \item The big picture $\cB$ factorizes as
   	\begin{eqnarray*}
	\cB &=& \prod_{p}  \cT_{p}
	\end{eqnarray*}
  where the product runs over all primes.
\end{enumerate}
\end{lemma}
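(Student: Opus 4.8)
The plan is to transport the global description $\cP\cL_{\Lambda_{1}}\simeq\PSL_{2}(\bZ)\diagdown\PGL_{2}^{+}(\bQ)$ of Remark~\ref{basisrmrk} to a local one, one prime at a time, matching it with the Bruhat-Tits tree of $\PGL_{2}(\bQ_{p})$. Fix $p$. A class $\nu$ lies in $\cT_{p}$ precisely when $\det s(\bar{g})$ is a power of $p$, where $s$ is the integral lowest-terms section of \eqref{eqsection}; since $s(\bar{g})$ is a primitive integral matrix its Smith normal form is $\mathrm{diag}(1,\det s(\bar{g}))=\mathrm{diag}(1,\delta_{1}(\bar{g}))$, hence $s(\bar{g})\in\GL_{2}(\bZ_{q})$ for every prime $q\neq p$. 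Thus, for $\nu,\nu'\in\cT_{p}$, the lattices $\Lambda,\Lambda'$ have the same $\bZ_{q}$-completion as $\Lambda_{1}$ whenever $q\neq p$, so the hyperdistance $\delta(\Lambda,\Lambda')=\delta_{1}(gh^{-1})$, read off one prime at a time, is a power of $p$; in particular the edges of $\cB$ joining two vertices of $\cT_{p}$ are exactly the ones realising hyperdistance $p$. To compute the valence I would represent $\nu\in\cT_{p}$ by an actual lattice $\Lambda$ and identify the vertices $\nu'$ with $\delta(\Lambda,\Lambda')=p$ with the projective classes of index-$p$ sublattices of $\Lambda$: if $\delta(\Lambda,\Lambda')=p$ then $\alpha'\Lambda'\subset\Lambda$ has index $p$ for a suitable $\alpha'\in\bQ^{*}$; conversely any index-$p$ sublattice $\Lambda''\subset\Lambda$ has $p\Lambda\subset\Lambda''\subset\Lambda$ with $[\Lambda:\Lambda'']=[\Lambda'':p\Lambda]=p$, so $\delta(\Lambda,\Lambda'')=p$; and distinct index-$p$ sublattices are never projectively equivalent, since $\alpha\Lambda''=\Lambda'$ with $\alpha\in\bQ^{*}$ and $[\Lambda:\alpha\Lambda'']=p=[\Lambda:\Lambda']$ forces $\alpha^{2}=1$. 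As the index-$p$ sublattices of $\Lambda$ are the lines in $\Lambda/p\Lambda\cong\bF_{p}^{2}$, i.e.\ the $p+1$ points of $\bP^{1}(\bF_{p})$, the graph $\cT_{p}$ is $(p+1)$-valent.

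For connectedness I would argue with elementary divisors: if $\delta(\Lambda_{1},\Lambda)=p^{k}$ then $\nu$ is represented by a sublattice $\Lambda\subset\Lambda_{1}$ of index $p^{k}$ with $\Lambda_{1}/\Lambda$ cyclic (the Smith normal form of $s(\bar{g})$ being $\mathrm{diag}(1,p^{k})$); the unique intermediate lattice $\Lambda\subset\Lambda''\subset\Lambda_{1}$ with $[\Lambda_{1}:\Lambda'']=p^{k-1}$ has $\delta(\Lambda_{1},\Lambda'')=p^{k-1}$ and $\delta(\Lambda'',\Lambda)=p$, so induction on $k$ gives a path from $\nu$ to the class of $\Lambda_{1}$ inside $\cT_{p}$. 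For acyclicity the crucial computation is that, given $\nu\leftrightarrow\Lambda$ and $\mu\leftrightarrow\Lambda_{\mu}$ with $p^{m}\Lambda\subset\Lambda_{\mu}\subset\Lambda$ and $\Lambda/\Lambda_{\mu}$ cyclic, exactly one of the $p+1$ neighbours of $\nu$ --- the preimage in $\Lambda$ of the order-$p^{m-1}$ subgroup of $\Lambda/\Lambda_{\mu}$ --- lies at hyperdistance $p^{m-1}$ from $\mu$, while the other $p$ lie at hyperdistance $p^{m+1}$; this is once more a covolume computation. Thus, along every edge of $\cT_{p}$ the hyperdistance to a fixed $\mu$ is multiplied by $p$ or by $p^{-1}$, and from each vertex other than $\mu$ there is a unique neighbour at which it decreases, which forces $\cT_{p}$ to be a tree. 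Equivalently this identifies $\cT_{p}$, with its metric $\d_{1}=\log\delta_{1}$, with the Bruhat-Tits tree of $\PGL_{2}(\bQ_{p})$ --- vertices $\PGL_{2}(\bZ_{p})\diagdown\PGL_{2}(\bQ_{p})$, edges the index-$p$ incidences --- which is classically a $(p+1)$-regular tree; in the write-up I would invoke this (see also \cite{Conway,Duncan}) rather than spell out the elementary argument.

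For the factorization, by $\prod_{p}\cT_{p}$ I mean the restricted graph-Cartesian (box) product: a vertex is a family $(\nu_{p})_{p}$ with $\nu_{p}$ a vertex of $\cT_{p}$ equal to the class of $\Lambda_{1}$ for all but finitely many $p$, and $(\nu_{p})\sim(\nu'_{p})$ iff $\nu_{p}=\nu'_{p}$ for every prime except one $q$, at which $\nu_{q}\sim\nu'_{q}$ in $\cT_{q}$. The bijection on vertices is the Chinese remainder theorem at the level of \eqref{eqsection}: writing $\delta_{1}(\bar{g})=\prod_{p}p^{a_{p}}$, the $\SL_{2}(\bZ)$-orbit of the primitive matrix $s(\bar{g})$ is determined by its reductions modulo the $p^{a_{p}}$, which split it into classes $\nu_{p}\in\cT_{p}$ at hyperdistance $p^{a_{p}}$ from $\Lambda_{1}$, almost all trivial; inverting this is the same reconstruction. (Adelically this is a form of $\PSL_{2}(\bZ)\diagdown\PGL_{2}^{+}(\bQ)\simeq{\prod_{p}}'\,\PGL_{2}(\bZ_{p})\diagdown\PGL_{2}(\bQ_{p})$, coming from $\GL_{2}(\bZ)=\GL_{2}(\bQ)\cap\prod_{p}\GL_{2}(\bZ_{p})$.) Under this bijection the hyperdistance is multiplicative, $\delta(\nu,\nu')=\prod_{p}\delta(\nu_{p},\nu'_{p})$, because it is computed by a determinant via \eqref{eqsection} and both the determinant and the elementary-divisor decomposition localise. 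An edge of $\cB$ is by definition a pair with $\delta(\nu,\nu')$ prime, say equal to $q$; by multiplicativity this holds iff $\delta(\nu_{q},\nu'_{q})=q$ and $\delta(\nu_{p},\nu'_{p})=1$, i.e.\ $\nu_{p}=\nu'_{p}$, for all $p\neq q$ --- exactly the box-product adjacency. Hence $\cB\cong\prod_{p}\cT_{p}$ as graphs.

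The main obstacle is the local-global dictionary: transporting ``projective class of a lattice in $\bC$ commensurable with $\Lambda_{1}$'' to the family of its $p$-adic shapes, with $\cT_{p}$ recording only the $p$-part, and verifying that Conway's hyperdistance becomes the product of the local graph distances. Once that dictionary and the multiplicativity $\delta(\nu,\nu')=\prod_{p}\delta(\nu_{p},\nu'_{p})$ are in place, both assertions are essentially formal; the one genuinely non-combinatorial input is the tree property of $\cT_{p}$, which I would take from the structure theory of the Bruhat-Tits tree of $\PGL_{2}$ over a local field.
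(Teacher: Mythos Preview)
Your proof is correct and follows the same underlying ideas as the paper: neighbours of $\nu$ are the projective classes of index-$p$ sublattices, vertices at hyperdistance $p^{n}$ are joined by a unique shortest path of length $n$, and the factorisation comes from the multiplicativity of the hyperdistance. The differences are mainly ones of presentation and level of detail. The paper enumerates the $p+1$ neighbours via the explicit coset matrices $\left(\begin{smallmatrix}p&0\\0&1\end{smallmatrix}\right)$ and $\left(\begin{smallmatrix}1&b\\0&p\end{smallmatrix}\right)$, $0\le b\le p-1$, whereas you phrase the same count as the lines in $\Lambda/p\Lambda\simeq\bF_{p}^{2}$; for the tree property the paper simply asserts the existence and uniqueness of the length-$n$ path, while you supply the elementary-divisor argument and the ``unique descending neighbour'' step; and for the factorisation the paper says it ``follows from the definition of $\cB$'', whereas you spell out the restricted box product and the multiplicativity $\delta(\nu,\nu')=\prod_{p}\delta(\nu_{p},\nu'_{p})$. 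Your invocation of the Bruhat--Tits tree is also fine but slightly anticipatory: in the paper that identification is the content of the proposition immediately following this lemma, so if you want to mirror the paper's logical order you should keep your self-contained elementary-divisor argument and defer the Bruhat--Tits reference.
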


\begin{proof}
\vspace{0.3 cm} \noindent   To see this note that if a vertex $\nu$ corresponds to the projective class of a lattice $\Lambda$ with basis $\{ \omega_{1}, \omega_{2} \}$  then any lattice at hyperdistance $p$ from $\Lambda$ is projectively equivalent to a sublattice $\Lambda' \subset \Lambda$ with $\left[ \Lambda : \Lambda' \right]=p$ and the set of such lattices correspond under the identification in Remark~\ref{basisrmrk} to the p+1 matrices
 \begin{eqnarray*}
\left(\begin{array}{cc}p & 0 \\0 & 1\end{array}\right)  &\text{ and }& \left(\begin{array}{cc}1 & b \\0 &  p \end{array}\right)   \, \quad 0\leq b \leq p-1
\end{eqnarray*}
so given a prime number $p$ each vertex of $\cB$ is connected with exactly $p+1$ vertices. Also note that given two vertices $\nu$ and $\nu'$ with $\delta(  \nu ,  \nu') = p^{n}$ there is a unique path of length $n$ between them and this in turn is the shortest path connecting $\nu$ with $\nu'$. The second assertion follows from the definition of $\cB$.
\end{proof}

\vspace{0.4 cm} \noindent As a topological space the big picture $\cB$ thus coincides with the restricted product of the one-dimensional simplicial complexes $ \cT_{p}$ with respect to the base point $\nu_{1}$, the projective class of $\Lambda_{1}$. We clarify now the arithmetic meaning of this factorization.  

\vspace{0.4 cm} \noindent  For a prime number $p$ we call a lattice in $V_{p} = \Q_{p}^{2}$ a \emph{local lattice}, so a local lattice is by definition a finitely generated $\Z_{p}$-submodule which generates $V_{p}$ as a vector space over $\Q_{p}$ (in particular a lattice is free of rank 2 as a $\Z_{p}$-module). Two local lattices are projectively equivalent if they differ by multiplication by an element in $\bQ_{p}^{*}$. We say that two projective classes of local lattices are \emph{adjacent} if we can choose representatives $\Lambda_{p}$ and   $\Lambda'_{p} $ with $\Lambda_{p}' \subset \Lambda_{p} $ and $\Lambda_{p}/ \Lambda_{p}' \simeq \Z / p\Z$. The Bruhat-Tits building of $\GL_{2}( \Q_{p})$ is by definition the graph whose vertices are given by projective classes of local lattices with an edge connecting two vertices if and only if the corresponding projective classes of lattices are adjacent (cf. \cite{Serre}). We have the following: 

\begin{proposition}
For any prime number $p$ the tree $ \cT_{p}$ is canonically isomorphic to the Bruhat-Tits building of $\GL_{2}( \Q_{p})$.
\end{proposition}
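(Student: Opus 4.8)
The plan is to construct an explicit bijection between the vertex set of $\cT_p$ and the vertex set of the Bruhat-Tits building of $\GL_2(\bQ_p)$, and then to check that this bijection respects the adjacency relations on both sides. The natural map goes by localization: a lattice $\Lambda \subset \bC$ commensurable with $\Lambda_1$ satisfies $\bQ\Lambda = \bQ\Lambda_1$, so after choosing the basis $\{\omega_1,\omega_2\}$ we may regard $\Lambda$ as sitting inside $\bQ\Lambda_1 \cong \bQ^2$; tensoring with $\bZ_p$ produces a local lattice $\Lambda \otimes_{\bZ}\bZ_p \subset \bQ_p^2 = V_p$. First I would check that this assignment is well-defined on projective classes (rational scalars act compatibly on both sides) and that, restricted to vertices of $\cT_p$, it lands in the building of $\GL_2(\bQ_p)$ — indeed every projective class of local lattices has a representative of the form $M_2(\bZ_p)$-column span, and the image is hit precisely by those global classes whose hyperdistance to $\Lambda_1$ is a power of $p$.

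Second, I would establish injectivity and surjectivity of this localization map on the level of $\cT_p$-vertices. The key observation is that commensurability with $\Lambda_1$ where only $p$ divides the hyperdistance means that at every prime $\ell \neq p$ the localization $\Lambda\otimes\bZ_\ell$ agrees (up to scale) with $\Lambda_1\otimes\bZ_\ell$; so a global class in $\cT_p$ is completely determined by its $p$-adic localization, giving injectivity. For surjectivity, given any local lattice $\Lambda_p \subset V_p$ one uses the standard local-global principle for lattices: there is a unique global lattice $\Lambda \subset \bQ^2$ with $\Lambda\otimes\bZ_p = \Lambda_p$ and $\Lambda\otimes\bZ_\ell = \bZ_\ell^2$ for all $\ell \neq p$ (intersect the relevant local conditions inside $\bQ^2$), and this $\Lambda$ has $p$-power hyperdistance to $\Lambda_1$ by Lemma~\ref{hypermetric} applied with the elementary divisor theorem over $\bZ_p$.

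Third, I would verify that edges correspond. An edge of $\cT_p$ joins $\nu$ and $\nu'$ exactly when the hyperdistance of the corresponding lattices is $p$, i.e. (after choosing projective representatives) one can arrange $\Lambda' \subset \Lambda$ with $\Lambda/\Lambda' \cong \bZ/p\bZ$; applying $-\otimes\bZ_p$ gives $\Lambda'_p \subset \Lambda_p$ with quotient $\bZ/p\bZ$, which is precisely adjacency in the building. Conversely, adjacency of the localizations can be lifted back to a global inclusion of index $p$ using the local-global description from the previous step, since at all other primes the localizations already coincide. Combined with the fact (from the proof of the preceding lemma) that both graphs are $(p+1)$-valent trees, this shows the map is a graph isomorphism.

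The main obstacle I expect is making the surjectivity and edge-lifting arguments genuinely canonical rather than basis-dependent: the localization map is defined only after fixing the basis $\{\omega_1,\omega_2\}$ of $\Lambda_1$, so I should check that a different choice of basis changes both sides by the same element of $\GL_2(\bZ_p)$ (acting through $\SL_2(\bZ) \hookrightarrow \GL_2(\bZ_p)$), hence induces a compatible automorphism of the building, so that the isomorphism is canonical in the sense that it does not depend on the basis. The cleanest way to phrase this is to note that $\cT_p$, the building, and the adjacency relation are all functorial in the pair $(\bQ^2, \text{lattice } \Lambda_1\otimes\bZ_p)$, and the ambiguity in the basis is exactly the stabilizer, so it acts trivially on the resulting isomorphism of graphs. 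The remaining steps — compatibility with scalars, the elementary-divisor computation of $\delta_1$ locally, and the index-$p$ lifting — are routine once the local-global dictionary is set up.
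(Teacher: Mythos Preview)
Your proposal is correct and follows essentially the same approach as the paper: define the map by $p$-adic localization $\Lambda \mapsto \Lambda^{(p)} := \Lambda \otimes_{\bZ} \bZ_p \subset V \otimes_{\bQ} \bQ_p \cong V_p$, then check that edges match because hyperdistance $p$ corresponds to an index-$p$ inclusion, which is exactly adjacency in the building. The paper's own proof is considerably terser---it essentially asserts the bijection and records that the path length in $\cT_p$ equals the $p$-adic valuation $u_p(\delta(\Lambda,\Lambda_1))$---whereas you spell out injectivity, surjectivity via the local-global principle for lattices, and the canonicity check under change of basis; these are exactly the details the paper leaves implicit.
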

\begin{proof}
Fix a lattice $\Lambda_{1}$ in $\bC$ and let $V= \bQ \Lambda_{1}$. Then $\bQ \Lambda = V$ for any lattice commensurable with $\Lambda_{1}$ and
$V\otimes_{\bQ}\bQ_{p}$ is isomorphic as a $\bQ_{p}$-vector space to $V_{p}$. Under this isomorphism the $\bZ_{p}$-module $ \Lambda^{(p)}$ generated by the image in $V\otimes_{\bQ}\bQ_{p}$ of a lattice $\Lambda \in \cL_{\Lambda_{1}}$ is a local lattice. The local lattice $\Lambda^{(p)}$ associated to $\Lambda$ is different  from $ \Lambda_{1}^{(p)}$ if and only if $p$ divides  $\delta (\Lambda, \Lambda_{1}) $ in which case the corresponding vertex in the Big Picture $\nu\in \cV_{\cB}$ belongs to $ \cT_{p}$ and is connected to the vertex $\nu_{1}$ corresponding to the projective class of $\Lambda_{1}$ by a path of length $u_{p} (\delta (\Lambda, \Lambda_{1}))$ where $u_{p} $ is the $p$-adic valuation given by $|r|_{p} = p^{-u_{p}(r)}$. Two classes of lattices $\nu$ and $\nu'$ in $\cT_{p}$ correspond to adjacent lattices in the Bruhat-Tits building of $\GL_{2}( \Q_{p})$ if and only the hyperdistance between two corresponding representatives $\Lambda$ and $\Lambda'$ is p, that is if and only if $\nu$ and $\nu'$ are connected by an edge in $ \cT_{p}$.   
\end{proof}

\begin{corollary}
\label{collorary1}
The big picture is isomorphic to the product over all prime numbers $p$ of the Bruhat-Tits buildings of $\GL_{2}( \Q_{p})$.
\end{corollary}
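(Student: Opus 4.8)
The statement is the conjunction of the two preceding results, so the plan is simply to assemble them while keeping careful track of the product structure on both sides. First I would invoke the factorization $\cB = \prod_{p} \cT_{p}$ from the lemma above, recalling that --- as noted in the paragraph following its proof --- the product here is the restricted product of the one-dimensional simplicial complexes $\cT_{p}$ with respect to the base vertex $\nu_{1}$, the projective class of $\Lambda_{1}$. Explicitly, a vertex $\nu \in \cV_{\cB}$ corresponding to a lattice $\Lambda$ is sent to the family $(\nu^{(p)})_{p}$, where $\nu^{(p)} \in \cT_{p}$ is the vertex joined to $\nu_{1}$ by the unique geodesic of length $u_{p}\!\left(\delta(\Lambda,\Lambda_{1})\right)$; since $\delta(\Lambda,\Lambda_{1})$ has only finitely many prime divisors, $\nu^{(p)} = \nu_{1}$ for all but finitely many $p$, which is exactly the restriction condition defining the restricted product.

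Next I would apply the Proposition, which gives for each prime $p$ a canonical isomorphism of graphs between $\cT_{p}$ and the Bruhat-Tits building of $\GL_{2}(\bQ_{p})$, carrying the base vertex $\nu_{1} \in \cT_{p}$ to the projective class of the standard local lattice $\Lambda_{1}^{(p)} \subset V_{p}$. Taking the product of these isomorphisms over all primes and observing that it carries the restricted product with respect to $\nu_{1}$ on the left-hand side onto the restricted product with respect to $(\Lambda_{1}^{(p)})_{p}$ on the right-hand side yields the desired isomorphism.

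The only point that deserves a word of care, and is therefore what I would check explicitly, is the compatibility of edges: an edge of $\cB$ joins $\nu$ and $\nu'$ exactly when $\delta(\Lambda,\Lambda')$ is a prime $p$, and by the local analysis carried out in the proof of the Proposition this happens if and only if $\nu^{(q)} = \nu'^{(q)}$ for all $q \neq p$ while $\nu^{(p)}$ and $\nu'^{(p)}$ are adjacent in the building of $\GL_{2}(\bQ_{p})$ --- which is precisely the definition of an edge in the (restricted) product of the buildings. So there is no real obstacle here: the corollary is a formal consequence of the preceding lemma and proposition once the product structures on the two sides are made explicit.
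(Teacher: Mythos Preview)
Your proposal is correct and follows exactly the approach implicit in the paper: the corollary is stated there without proof, as the immediate combination of the factorization $\cB = \prod_{p}\cT_{p}$ from the lemma with the identification of each $\cT_{p}$ with the Bruhat-Tits building from the proposition. You have simply spelled out the restricted-product and edge-compatibility details that the paper leaves tacit.
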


\vspace{0.3 cm} 


\subsection{The big picture and congruence subgroups} 
\label{Section2.4}
 Given a subgroup $\Gamma \subset \GL_{2}^{+}(\bQ)$ we will denote by $\overline{\Gamma}$ its image in $\PGL_{2}^{+}(\bQ)$. Likewise we will denote subgroups of $PGL_{2}^{+}(\bQ)$ with a bar. In particular for a positive integer $N$ we denote by $\overline{\Gamma}(N)$ the image in $\PGL_{2}^{+}(\bQ)$ of the principal congruence subgroup
 \begin{eqnarray*}
\Gamma(N) &=& \{ \gamma  \, = \,   \left(\begin{array}{cc} a & b \\c & d\end{array}\right) \in \SL_{2}(\bZ)  \, | \,     a\equiv d \equiv   1 \,  \Mod  N , \,  b \equiv c \equiv 0 \, \Mod N  \}
 \end{eqnarray*}
and by $\overline{\Gamma}_{0}(N)$ the image in $\PGL_{2}^{+}(\bQ)$ of
 \begin{eqnarray*}
\Gamma_{0}(N) &=& \{   \gamma  \, = \,  \left(\begin{array}{cc} a & b \\c & d\end{array}\right)  \SL_{2}(\bZ)  \, |  \, c \equiv 0  \,  \Mod N  \}
 \end{eqnarray*}
We say that $\Gamma \subset \GL_{2}^{+}(\bQ)$ (resp. $\overline{\Gamma} \subset \PGL_{2}^{+}(\bQ)$) is a \emph{congruence subgroup} if it contains some $\Gamma(N)$ (resp. $\overline{\Gamma}(N)$) with finite index. 

\vspace{0.3 cm} \noindent Looking at the action of elements of $\PGL_{2}^{+}(\bQ)$ on the Big Picture we can understand various results about congruence subgroups. In what follows we describe briefly Conway's approach where congruence subgroups and their supergroups are viewed as pointwise and setwise stabilizers of particular finite subgraphs of $\cB$ \footnote{Our notation for various groups and lattices differs slightly from that used by Conway}.

\vspace{0.3 cm} \noindent Fix a lattice $\Lambda_{1}$ together with a basis $\{ \omega_{1}, \omega_{2} \}$. We let $\Lambda_{N}$ be the lattice with basis $\{ N \omega_{1}, \omega_{2} \}$ and denote by $\nu_{1}$ and $\nu_{N}$ the projective classes of $\Lambda_{1}$ and $\Lambda_{N}$. We have then $\delta(\Lambda_{1}, \Lambda_{N}) =\delta( \nu_{1}, \nu_{N}) = N$ and for any two lattices $\Lambda$ and $\Lambda'$ at hyperdistance $\delta(\Lambda, \Lambda') = N$ we can find a matrix $g \in \GL_{2}^{+}(\bQ)$ such that $g \Lambda = \Lambda_{1}$ and $ g \Lambda'= \Lambda_{N}$ and so $\bar{g} \in \PGL_{2}^{+}(\bQ)$ takes the corresponding projective classes $\nu$ and $\nu'$ to $\nu_{1}$ and $\nu_{N}$.

\vspace{0.3 cm} \noindent  Conway observes that the subgroup $\overline{\Gamma}_{0}(N|1) := \overline{\Gamma}_{0}(N)$ is the joint stabilizer of $\nu_{1}$ and $\nu_{N}$ and defines for  two lattices $\Lambda$ and $\Lambda'$ the group $\overline{\Gamma}_{0}(\nu | \nu')$ to be the joint stabilizer of their projective classes. In the case of the lattices $\Lambda_{l}$ and $\Lambda_{M}$ corresponding to a positive integer $M$ and a divisor $l$ of $M$ we write $\overline{\Gamma}_{0}(M|l) =  \overline{\Gamma}_{0}(\nu_{M}|\nu_{l})$ and note that this is the conjugate of $\overline{\Gamma}_{0}(N)$, $N=\frac{M}{l}$, by the matrix  $g_{l} = \left(\begin{array}{cc} l & 0 \\ 0 & 1 \end{array}\right)$ taking $\nu_{1}$ to $\nu_{l}$ and $\nu_{N}$ to $\nu_{M}$. 

\vspace{0.3 cm} \noindent In order to describe the normalizer of $\overline{\Gamma}_{0}(N)$ it is useful to consider the subgraph lying between $\nu_{1}$ and $\nu_{N}$. First define for a prime power $p^n$ the $(1|p^n)$-\emph{thread }$\mathfrak{t}_{p^n}$ to be the unique shortest path between $ \nu_{1}$ and $\nu_{p^n}$ in $ \cT_{p}$. For a positive integer $N$ the $(1|N)$-thread $\mathfrak{t}_{N}$ is the product of the threads $\mathfrak{t}_{p_i^{n_{i}}}$ corresponding to the prime decomposition of $N$. The $(N|1)$-thread is fixed pointwise by $\overline{\Gamma}_{0}(N)$. Symmetries of this thread correspond then to cosets of  $\overline{\Gamma}_{0}(N)$, these are the Atkin-Lehner involutions $\{W_{e} | e \text{ exact divisor of } N \}$ introduced by  Atkin and Lehner in \cite{AtkinLehner}. The extension of $\overline{\Gamma}_{0}(N)$ by its Atkin-Lehner involutions is denoted by $\overline{\Gamma}_{0}(N)+$ and is the setwise stabilizer of the $(1|N)$-thread. Likewise, for $lN =M$  we denote by $\overline{\Gamma}_{0}(M|l)+$ the conjugate of $\overline{\Gamma}_{0}(N)+$ by $g_{l}$. The set of projective classes of lattices fixed by  $\overline{\Gamma}_{0}(N)$ constitutes what Conway calls the $(N|1)$-\emph{snake} $\mathfrak{s}_{N}$ and is given by the classes of lattices whose hyperdistance to $\mathfrak{t}_{N}$ divides 24. Looking at the setwise stabilizer of $\mathfrak{s}_{N}$ one obtains the Atkin-Lehner theorem: the normalizer of $\overline{\Gamma}_{0}(N)$ in $\PSL_{2}(\bR)$ is the group $\overline{\Gamma}_{0}(\frac{N}{h}| h)+$ where $h$ is the largest divisor of $24$ for which $h^{2}$ divides $N$.


\section{Commensurability classes of $\bQ$-lattices and the Connes-Marcolli $\GL_{2}$-system}
\label{section3}

\vspace{0.3 cm} \noindent  In this section we recall the definition of the $GL_{2}$-system introduced in \cite{ConnesMarcolli1}.

\begin{definition}
A 2-dimensional $\bQ $-lattice is given by a pair $( \Lambda ,   \phi )$ where $\Lambda $ is a lattice in $\bC $ and
 \begin{eqnarray*}
 \phi :  \bQ ^{2} / \bZ  ^{2}  \rightarrow  \bQ \Lambda / \Lambda
 \end{eqnarray*}
 is a group homomorphism.
\end{definition}

\vspace{0.3 cm} \noindent  A $\bQ $-lattice $( \Lambda ,   \phi )$ is \emph{invertible} if  $ \phi$ is an isomorphism. Two $\bQ $-lattices $( \Lambda ,   \phi)$ and $( \Lambda' ,   \phi')$ are \emph{commensurable} if $\Lambda$ and  $\Lambda'$ are commensurable as lattices (thus $\bQ \Lambda = \bQ \Lambda'$) and
 \begin{eqnarray*}
  \phi =  \phi'  \mod  (\Lambda + \Lambda').
 \end{eqnarray*}

 \begin{lemma}[\cite{ConnesMarcolli1}]
The relation of commensurability between $\bQ $-lattices is an equivalence relation.
\end{lemma}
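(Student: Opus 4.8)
The plan is to verify the three axioms of an equivalence relation for commensurability of $\bQ$-lattices, with reflexivity and symmetry being immediate and transitivity requiring a small argument. First I would note that reflexivity is trivial: $\Lambda$ is commensurable with itself and $\phi \equiv \phi \mod (\Lambda + \Lambda) = \Lambda$. Symmetry is equally immediate because commensurability of lattices is symmetric and the congruence $\phi \equiv \phi' \mod (\Lambda + \Lambda')$ is manifestly symmetric in the two $\bQ$-lattices. So the content is in transitivity.

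For transitivity, suppose $(\Lambda, \phi)$ is commensurable with $(\Lambda', \phi')$ and $(\Lambda', \phi')$ is commensurable with $(\Lambda'', \phi'')$. Since commensurability of lattices is transitive (it is an equivalence relation on $\cL$), we have $\Lambda$ commensurable with $\Lambda''$, and in particular $\bQ\Lambda = \bQ\Lambda' = \bQ\Lambda''$, so all the quotient maps land in the same rational vector space $V = \bQ\Lambda$. It remains to check $\phi \equiv \phi'' \mod (\Lambda + \Lambda'')$. The key observation is that for each $x \in \bQ^2/\bZ^2$ we have $\phi(x) - \phi'(x) \in (\Lambda + \Lambda')/\Lambda'' $ read inside $V/(\Lambda+\Lambda'+\Lambda'')$; more precisely, working in $V$ modulo the subgroup $\Lambda + \Lambda' + \Lambda''$, both $\phi(x) - \phi'(x)$ and $\phi'(x) - \phi''(x)$ represent $0$, hence so does their sum $\phi(x) - \phi''(x)$. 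Since $\Lambda + \Lambda'' \subseteq \Lambda + \Lambda' + \Lambda''$, concluding that $\phi(x) - \phi''(x)$ lies in the smaller group $(\Lambda+\Lambda'')$-worth of ambiguity needs care — a priori we only get vanishing modulo the larger lattice $\Lambda + \Lambda' + \Lambda''$.

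The main obstacle is precisely this last point: a naive chaining of the two congruences only yields $\phi \equiv \phi'' \mod (\Lambda + \Lambda' + \Lambda'')$, which is weaker than the required $\phi \equiv \phi'' \mod (\Lambda + \Lambda'')$. To overcome it I would exploit that $\Lambda'$ is commensurable with both $\Lambda$ and $\Lambda''$, so there is a positive integer $n$ with $n\Lambda' \subseteq \Lambda \cap \Lambda''$, and hence $n(\Lambda + \Lambda' + \Lambda'') \subseteq \Lambda + \Lambda''$. Applying the hypotheses to the elements $nx$ for $x \in \bQ^2/\bZ^2$ (which range over all of the group since it is $n$-divisible, i.e. multiplication by $n$ is surjective on $\bQ^2/\bZ^2$) and using $\bQ$-linearity of the statement, one shows that $\phi(x) - \phi''(x)$ actually lies in $\Lambda + \Lambda''$. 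Alternatively, and perhaps more cleanly, one can phrase the whole congruence condition adelically: a $\bQ$-lattice is equivalent to a pair consisting of a lattice and an element of $V \otimes_{\bQ} (\bA_f/\widehat{\bZ})$-type data, under which commensurability becomes the condition that the two data agree in the appropriate quotient, and transitivity is then transparent. I expect the divisibility trick to be the shortest route and would write that up, leaving the adelic reformulation as a remark. Since the paper attributes this lemma to \cite{ConnesMarcolli1}, it suffices to indicate the argument rather than belabor it.
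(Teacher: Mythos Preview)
Your argument is correct. The paper itself does not prove this lemma at all --- it simply records the statement with a citation to \cite{ConnesMarcolli1} and moves on --- so there is no proof in the paper to compare against. Your divisibility trick (choosing $n$ with $n\Lambda' \subseteq \Lambda \cap \Lambda''$, so that the difference $\phi - \phi''$, which a priori lands in the $n$-torsion group $(\Lambda+\Lambda'+\Lambda'')/(\Lambda+\Lambda'')$, is forced to vanish by surjectivity of multiplication by $n$ on the divisible group $\bQ^2/\bZ^2$) is exactly the standard way to close the gap you correctly identified, and it is essentially the argument in Connes--Marcolli. One small wording point: say ``additivity of $\phi,\phi''$'' rather than ``$\bQ$-linearity'', since these are only group homomorphisms.
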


\vspace{0.3 cm} \noindent  We say that two $\bQ $-lattices $( \Lambda ,   \phi )$ and $( \Lambda' ,   \phi' )$ are \emph{equivalent up to scale} if there exist $ \lambda \in \bC^{*}$ such that $(  \lambda \Lambda ,  \lambda  \phi ) = ( \Lambda' ,   \phi')$. The $\bC^{*}$-action on  $\bQ $-lattices preserves the commensurability relation.  Denote by $\widehat{ \bZ } =  \lim_{ \overleftarrow{n} }    \bZ / n\bZ $ the ring of profinite integers. Identifying $ \Hom (\bQ ^{2} / \bZ  ^{2}) $ with  $ M_{2}(\widehat{ \bZ } )$ we have the following:


\begin{remark}
The set of $\bQ $-lattices up to scale can be identified with the product $ M_{2}(\widehat{ \bZ } ) \times \mathfrak{H} $ modulo $SL_{2}(\bZ) $ where the action is given by left multiplication on the first factor and by fractional linear transformations on the second factor.
\end{remark}

 \vspace{0.3 cm} \noindent  As above once we choose a basis for a lattice, commensurability can be expressed in terms of elements of $GL_{2}^{+}(\bQ)$. Two $\bQ $-lattices, corresponding respectively to pairs $( \rho_{1}, z_{1} )$ and $( \rho_{2}, z_{2} )$ in  $M_{2}(\widehat{ \bZ } ) \times \mathfrak{H}$, are commensurable if and only if there exists an element  $g \in GL_{2}^{+}(\bQ)$ such that $\rho_{2} = g \rho_{1} $ and $z_{2} = g z_{1}$.

 \vspace{0.3 cm} \noindent  We turn now to the arithmetic geometry of the space of commensurability classes of $\bQ $-lattices. Following  \cite{ConnesMarcolli1} we study  this space from the vantage point of noncommutative geometry. We start by introducing a noncommutative algebra of functions naturally encoding the commensurability relation of $\bQ $-lattices up to scale.

  \vspace{0.3 cm} \noindent
Let $\Gamma_{1} =SL_{2}(\bZ)$ and let $\cA_{c}$ be the set of $\bC$-valued functions with compact support on
 \begin{eqnarray*}
\cY &=& \{  (g,  \rho, z) \in GL_{2}^{+}(\bQ) \times M_{2}(\widehat{ \bZ }) \times \mathbb{H} \; | \; g  \rho \in M_{2}(\widehat{ \bZ }) \}
\end{eqnarray*}
such that
\begin{eqnarray*}
f(  \gamma g,  \rho, z) = f(g, \rho,  z) &\text{   and    }&  f(g  \gamma ,  \rho, z) = f(g, \gamma  \rho,  \gamma z)
\end{eqnarray*}
for all $\gamma  \in \Gamma_{1}$.

  \vspace{0.3 cm} \noindent
The product
 \begin{eqnarray}
 \label{product}
f_{1} \ast f_{2}  (g,  \rho, z) &=& \sum_{  s \in  \Gamma_{1}  \diagdown    GL_{2}^{+}(\bQ) \, : \,  s \rho \in M_{2}(\widehat{ \bZ })  }
f_{1}  (g s^{-1},  s \rho,  s z)  f_{2}  (s ,  \rho, z)
 \end{eqnarray}
and convolution
  \begin{eqnarray*}
f^{*} (g,  \rho, z) &=&\overline{ f(g^{-1}, g \rho, g z) }.
 \end{eqnarray*}
 define a $*$-algebra algebra structure on $\cA_{c}$.

\vspace{0.3 cm} \noindent   Given any element $y  =(\rho, z) \in M_{2}(\widehat{ \bZ } ) \times \mathfrak{H}$ let
 \begin{eqnarray*}
\cY_{y }&=& \{ g  \in GL_{2}^{+}(\bQ)  \; | \; g  \rho \in M_{2}(\widehat{ \bZ }) \}.
 \end{eqnarray*}
The algebra  $\cA_{c}$  admits a representation as an algebra of bounded operators on $\cH_{y}= \ell^{2}(\Gamma  \, \diagdown   \cY_{y})$ given by:
 \begin{eqnarray}
  \label{rep1}
\pi_{y} (f) \xi (g) &=& \sum_{  h \in \Gamma_{1}  \diagdown   \cY_{y}  } f  (g h^{-1},  h \rho,  h z)  \xi  (h).
\end{eqnarray}

\begin{definition}
The algebra of \emph{continuous functions on the space of commensurability classes of $\bQ$-lattices up to scale} is the $C^{*}$-algebra $\cA$ obtained as the completion of $\cA_{c}$ under the norm
 \begin{eqnarray*}
\|  f  \| &=& \sup_{y} \| \pi_{y} (f)  \|
 \end{eqnarray*}
\end{definition}

\vspace{0.3 cm} \noindent  The algebra $\cA $ admits a norm-continuous uniparametric group of automorphisms given by:
 \begin{eqnarray*}
(\sigma_{t} f )(g,  \rho, z) &=& \det (g)^{i t } \,  f (g,  \rho, z), \qquad t \in \bR
 \end{eqnarray*}

 \vspace{0.3 cm} \noindent  A pair consisting of a $C^{*}$-algebra together with a uniparametric group of automorphisms defines a \emph{quantum statistical mechanical system}. We refer the reader to \cite{Bratteli} for various generalities about quantum statistical mechanical systems and to \cite{ConnesMarcolli1} and references therein for examples of these systems arising in arithmetic noncommutative geometry.

 \begin{definition}
We call the pair $(\cA, \sigma_{t} )$ with $\cA$ and $\sigma_{t}$ as above, the \emph{$GL_{2}$-system}. We will refer to $\cA$ as the \emph{algebra of observables} of the system and to  $\{ \sigma_{t} \in \Aut (\cA) \, | \,     t \in \bR \}$ as the \emph{time evolution of the system}.
\end{definition}

 \vspace{0.3 cm} \noindent Our interest in the $GL_{2}$-system comes from the fact that its thermodynamic properties encode the arithmetic theory of modular functions to an extend which makes it possible for us to capture aspects of moonshine theory.

 \vspace{0.3 cm} 

\section{From Conway's hyperdistance to the time evolution}
\label{section4}

 \vspace{0.3 cm} \noindent  In this section we relate the time evolution of the $GL_{2}$-system to Conway's hyper distance, this is done by showing that the algebra of observables can be represented as an algebra of operators acting on functions on the set of vertices of $\cB$. Before doing this however we make a few remarks on the extension of Conways's hyperdistance from the setting of commensurability classes of lattices to that of commensurability classes of $\bQ$-lattices. 
 
\subsection{Hyperdistance for $\bQ$-lattices} 
For $\bQ$-lattices Conway's definition of the hyperdistance continues to make sence but one has to take into account the divisibility properties of the labeling of torsion points. With the notation being as in Section~\ref{Section2.2} we have that for $\rho \in M_{2}(\widehat{ \bZ } )$ and $g \in GL_{2}^{+}(\bQ)$ 
 
  \begin{eqnarray*}
g \rho \in M_{2}(\widehat{ \bZ } ) & \text{ if and only if }& \frac{1}{ \alpha_{g}} \rho \in M_{2}(\widehat{ \bZ } )
 \end{eqnarray*}

 \vspace{0.3 cm} \noindent  In particular for $g \in GL_{2}^{+}(\bQ)$ with  $\alpha_{g} = N $ we have that $g \rho \in M_{2}(\widehat{ \bZ } ) $ only when the matrix elements of $ \rho$ are  $N$-divisible.

 \vspace{0.3 cm} \noindent Given a $\bQ$-lattice $( \Lambda_{1} ,  \phi_{1} )$ represented up to scale by $y_{1}  = (\rho_{1}, z_{1}) \in M_{2}(\widehat{ \bZ } ) \times \bH$, we let $\cL_{( \Lambda_{1} ,   \phi_{1} ) }^{\bQ} = \cL_{y_{1}}^{\bQ}  $ be the set of all $\bQ$-lattices commensurable with $( \Lambda_{1} ,   \phi_{1} )$ and we identify this set with $\Gamma_{1}  \diagdown   \cY_{y_{1}}$ where, as above,
 \begin{eqnarray*}
\cY_{y_{1}}&=& \{ g  \in GL_{2}^{+}(\bQ)  \; | \; g  \rho_{0}  \in M_{2}(\widehat{ \bZ }) \}
 \end{eqnarray*}
and $\Gamma_{1} = SL_{2}(\bZ)$.  We will view this set as a subset of $\cL_{\Lambda_{1}}$ (which we have in turn identified  with $ \Gamma_{1}   \diagdown   GL_{2}^{+}(\bQ)$).

 \vspace{0.3 cm} \noindent  We define the hyperdistance at the level of $\bQ$-lattices, $\delta_{1}^{\bQ}$, as the restriction of $\delta_{1}$ to $\cP \cL_{y_{1}}^{\bQ} \times \cP \cL_{y_{1}}^{\bQ}$ and take
 \begin{eqnarray*}
\mathrm{d }_{1}^{\bQ} &=& \log \delta_{1}^{\bQ}\, .
 \end{eqnarray*}
 
 \vspace{0.3 cm} \noindent  These definitions coincide with the previous ones when the base $\bQ$-lattice is invertible in which case $\cL_{( \Lambda_{1} ,   \phi_{1} ) }^{\bQ} =\cL_{\Lambda_{1} }$. 
 
  \vspace{0.3 cm} 
  
 
\subsection{Representation of the $\GL_{2}$-system on the big picture} We use the identification $\cL_{( \Lambda_{1} ,   \phi_{1} ) }^{\bQ} =\cL_{\Lambda_{1} }$ in the case of an invertible $\bQ$-lattice to prove the following 

\begin{theorem}
\label{theorem1}
Let $( \Lambda_{1} ,   \phi_{1} )$ be an invertible $\bQ$-lattice represented up to scale by $y_{1}  = (\rho_{1}, z_{1}) \in M_{2}(\widehat{ \bZ } ) \times \bH$. Denote by $\cV = \cP \cL_{\Lambda_{1} }$ the set of vertices of the big picture given by projective classes of lattices commensurable with $\Lambda_{1} $ and by $\mathrm{d}_{1}$ its metric as defined in Lemma~\ref{hypermetric}. Then:
\begin{enumerate}
  \item The $C^{*}$-algebra $\cA$ admits a representation $\pi$ as an algebra of bounded operators acting on $\cH  = \ell^{2}(\cV)$.
  \item The time evolution $\sigma_{t}$ is implemented in this representation by the operator
 \begin{eqnarray*}
(\mathrm{H}  \xi) (\nu)& =&  \mathrm{d }_{1} ( \nu_{1} ,  \nu  )  \, \xi (\nu)
 \end{eqnarray*}
via
 \begin{eqnarray*}
\pi (\sigma_{t} f )   &=& e^{i t \mathrm{H} }  \pi (f)  e^{-i t \mathrm{H}}
\end{eqnarray*}
where $\nu_{1}$ is the vertex corresponding to $ \Lambda_{1}$.
\end{enumerate}
\end{theorem}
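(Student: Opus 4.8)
The plan is to set up the representation by transporting the canonical representation $\pi_{y_1}$ of $\cA$ on $\cH_{y_1} = \ell^2(\Gamma_1 \diagdown \cY_{y_1})$ from Section~\ref{section3} across the identification coming from the invertibility hypothesis. Since $(\Lambda_1,\phi_1)$ is invertible, $\rho_1 \in \GL_2(\widehat{\bZ})$, so the defining condition $g\rho_1 \in M_2(\widehat{\bZ})$ is equivalent to $g \in M_2(\widehat{\bZ}) \cap \GL_2^+(\bQ)$; hence $\cY_{y_1} = \GL_2^+(\bQ)$ (there is no constraint), and therefore $\Gamma_1 \diagdown \cY_{y_1} = \Gamma_1 \diagdown \GL_2^+(\bQ) = \cL_{\Lambda_1}$. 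Passing to projective classes, or equivalently noting that scalars act trivially on $\ell^2$ after the standard normalization, identifies this with $\cV = \cP\cL_{\Lambda_1}$ via Remark~\ref{basisrmrk}. Thus one sets $\cH = \ell^2(\cV)$ and defines $\pi := \pi_{y_1}$ under this identification; formula~\eqref{rep1} shows $\pi$ is a $*$-representation by bounded operators, which gives part (1). (One should remark that the sum in~\eqref{rep1} is finite on compactly supported $f$ and bounded uniformly, exactly as in~\cite{ConnesMarcolli1}, so nothing new is needed here.)

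For part (2), the point is to compute how conjugation by $e^{it\mathrm{H}}$ acts on $\pi(f)$ and to match it with the action of $\sigma_t$. Writing $\nu$ for the class of a matrix $g$ and $\nu'$ for the class of $h$, the kernel of $\pi(f)$ is (up to the $\Gamma_1$-bookkeeping) $K_f(\nu,\nu') = f(gh^{-1}, h\rho_1, hz_1)$. Conjugating,
\begin{eqnarray*}
(e^{it\mathrm{H}} \pi(f) e^{-it\mathrm{H}})\,\xi(\nu) &=& \sum_{\nu'} e^{it\,\mathrm{d}_1(\nu_1,\nu)}\, K_f(\nu,\nu')\, e^{-it\,\mathrm{d}_1(\nu_1,\nu')}\, \xi(\nu') \\
&=& \sum_{\nu'} e^{it(\mathrm{d}_1(\nu_1,\nu) - \mathrm{d}_1(\nu_1,\nu'))}\, K_f(\nu,\nu')\, \xi(\nu').
\end{eqnarray*}
So the claim reduces to the identity $\mathrm{d}_1(\nu_1,\nu) - \mathrm{d}_1(\nu_1,\nu') = \log\det\bigl(s(\overline{gh^{-1}})\bigr) = \mathrm{d}_1(gh^{-1})$, i.e. $\delta_1(g)/\delta_1(h) = \delta_1(gh^{-1})$ wherever $K_f$ is supported, after which the right-hand side becomes $\sum_{\nu'} \det(gh^{-1})^{it}\, K_f(\nu,\nu')\,\xi(\nu')$, which is exactly $\pi(\sigma_t f)\xi(\nu)$ by the definition of $\sigma_t$.

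The key step — and the main obstacle — is that multiplicativity-type identity for $\delta_1$. It is \emph{not} true that $\delta_1(gh^{-1}) = \delta_1(g)\delta_1(h)^{-1}$ for arbitrary $g,h \in \GL_2^+(\bQ)$; what one needs is that the discrepancy is absorbed by the support condition on $f$. Concretely, $f$ is supported where $gh^{-1} \cdot (h\rho_1) = g\rho_1 \in M_2(\widehat{\bZ})$ and where $h\rho_1 \in M_2(\widehat{\bZ})$; since $\rho_1$ is a unit, this forces both $g$ and $h$ to lie in $M_2^+(\bZ)$, where by the discussion before the first Lemma one has $\alpha_g = \alpha_h = \alpha_{gh^{-1}}^{-1}\cdot(\text{adjustment})$... the clean way is to use Lemma~\ref{hypermetric}: with $\nu,\nu'$ of finite hyperdistance to $\nu_1$, one has $\mathrm{d}_1(\nu_1,\nu)=\mathrm{d}_1(e\cdot g^{-1})$-style expressions and then invokes the cocycle property $\delta(\Lambda_1,\Lambda)\cdot\delta(\Lambda,\Lambda') \ge \delta(\Lambda_1,\Lambda')$ with equality precisely on the relevant locus. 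I would therefore prove the needed additive identity $\mathrm{d}_1(\nu_1,\nu) = \mathrm{d}_1(\nu_1,\nu') + \mathrm{d}_1(\nu',\nu)$ on the support of $K_f$ as a separate lemma — tracing through the factorization $\cB = \prod_p \cT_p$ this becomes a prime-by-prime statement on the trees $\cT_p$, where the geodesic condition makes it transparent — and then feed it into the conjugation computation above. Everything else (boundedness, finiteness of sums, the $\Gamma_1$-equivariance needed to descend kernels to $\cV$) is routine and parallels~\cite{ConnesMarcolli1}.
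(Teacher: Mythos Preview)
Your setup for part~(1) contains an internal contradiction: you correctly derive that invertibility of $\rho_1$ makes the condition $g\rho_1 \in M_2(\widehat{\bZ})$ equivalent to $g \in M_2(\widehat{\bZ}) \cap \GL_2^+(\bQ) = M_2^+(\bZ)$, but then conclude ``$\cY_{y_1} = \GL_2^+(\bQ)$ (there is no constraint)'', which is the opposite. The correct statement is $\cY_{y_1} = M_2^+(\bZ)$, and this matters: $\Gamma_1 \diagdown M_2^+(\bZ)$ parametrizes sublattices of $\Lambda_1$, not projective classes, so the passage to $\cV$ is not the tautology you suggest but goes through the section $s$ of Equation~(\ref{eqsection}), which picks out the primitive integer representative of each projective class. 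The paper's proof makes exactly this move.

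For part~(2) there is a genuine gap. The time evolution multiplies by $\det(gh^{-1})^{it}$, not by $\delta_1(gh^{-1})^{it}$; these differ by the factor $\alpha_{gh^{-1}}^{2it}$, so your reduction to ``$\delta_1(g)/\delta_1(h) = \delta_1(gh^{-1})$'' is aiming at the wrong target. Worse, the additive identity $\mathrm{d}_1(\nu_1,\nu) = \mathrm{d}_1(\nu_1,\nu') + \mathrm{d}_1(\nu',\nu)$ that you propose to prove on the support of $K_f$ is simply false: it asserts that $\nu'$ lies on the geodesic from $\nu_1$ to $\nu$ in $\cB$, and nothing in the support condition forces this (take $g$ and $h$ to be primitive integer matrices pointing in different $p$-adic directions from the identity). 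The paper sidesteps the whole issue by working on $\Gamma_1 \diagdown M_2^+(\bZ)$ with the Hamiltonian $(\mathrm{H}_{y_1}\xi)(g) = \log\det(g)\,\xi(g)$, for which the needed cocycle identity $\log\det(g) - \log\det(h) = \log\det(gh^{-1})$ is just multiplicativity of the determinant (this is the content of the citation to \cite{ConnesMarcolli1}); only afterwards does it identify $\log\det(g)$ with $\mathrm{d}_1(\nu_1,\nu)$ via the section $s$, where $\alpha_g = 1$ and hence $\delta_1(g) = \det(g)$.
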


\begin{proof}

Let $( \Lambda_{1} ,   \phi_{1} )$ be an invertible $\bQ$-lattice represented up to scale by $y_{1}  = (\rho_{1}, z_{1}) \in M_{2}(\widehat{ \bZ } ) \times \bH$. For each element $\nu$ in $\cP \cL_{\Lambda_{1} } = \cV$ we can choose a representative given by a sublattice $\Lambda \subset  \Lambda_{1}$. At the matrix level this is equivalent to choosing the matrix representatives given by the map $s$ in Equation (\ref{eqsection}). Since $( \Lambda_{1} ,   \phi_{1} )$  is invertible we have that $\cP \cL_{y_{1}}^{\bQ}  = \cP \cL_{\Lambda_{1} } = \cV$ and we obtain the following identification for 
the set of $\bQ$-lattices commensurable with $( \Lambda_{1} ,   \phi_{1} )$:
 \begin{eqnarray*}
\cY_{y_{1}} &=&  \Gamma_{1}   \diagdown   M_{2}^{+}( \bZ  ) 
\end{eqnarray*}
Under this identification $ \delta_{1}$ is given by $\delta(g) = \det g$ for $g \in \cY_{y_{1}}$. We can then identify the Hilbert space $\cH$ with $\cH_{y_{1}}$ and obtain $\pi$ as the representation $\pi_{y_{1}}$ in (\ref{rep1}) of Section~\ref{section3}. As shown in \cite{ConnesMarcolli1}, in this representation the time evolution is implemented by the operator
 \begin{eqnarray*}
(\mathrm{H}_{y_{1}} \xi) (g)& =&  \log \det (g)  \xi (g)
 \end{eqnarray*}
which, under the above identifications, is given by $\mathrm{H}$.
\end{proof}

\vspace{0.3 cm} \noindent The spectrum of the operator $\mathrm{H}$ is given by:
 \begin{eqnarray*}
\Spec ( \mathrm{H} ) & =& \{  \mathrm{d }_{1} ( \nu_{0} ,  \nu  ) \, | \,   \nu \in \cV \} \\
&=&  \{ \log \det g \, | \,  g \in M_{2}^{+}( \bZ  )  \}
 \end{eqnarray*}
and the set $\cV_{n}\subset \cV$ corresponding to projective classes of lattices at hyperdistance $N$ to $ \Lambda_{1}$ is a basis for the eigenspace with eigenvalue $\log N$. If we denote by $P_{N}$ the projection onto this eigenspace we can view the $N$-th Hecke correspondence as an operator on $\cH$:
\begin{eqnarray*}
(T_{N} \xi )(  \nu )  &=& \sum_{ \delta_{1}( \nu ,  \nu'  ) = N } \xi (  \nu' )
 \end{eqnarray*}

\vspace{0.3 cm} \noindent In a similar way the various finite subgraphs whose pointwise and setwise stabilizers are of interest single out subespaces of $\cH$ which are invariant under the corresponding congruence subgroups. We will be interested in particular in the spaces corresponding to threads and snakes: 
\begin{eqnarray*}
\cH_{\mathfrak{t}_{N}} = Span\{ \nu \, |\, \nu \in {\mathfrak{t}_{N}} \} &,&\cH_{\mathfrak{s}_{N}} = Span\{ \nu \, |\, \nu \in {\mathfrak{s}_{N}}  \}
 \end{eqnarray*}
and will denote by $P_{\mathfrak{t}_{N}}$ and  $P_{\mathfrak{s}_{N}}$ the projections onto these subspaces. By the results discussed at the end of Section~\ref{Section2} the restriction to $\overline{\Gamma}_{0}(N)$ of the $\PGL_{2}(\bQ)$-action on $\cH$ is trivial on the subspaces $\cH_{\mathfrak{t}_{N}}$ and $\cH_{\mathfrak{s}_{N}}$ and the $\overline{\Gamma}_{0}(N)$-cosets given by the Atkin-Lehner involutions act as symmetries of these spaces.


\vspace{0.3 cm} \noindent
\subsection{Compatibility with the Bost-Connes  $\GL_{1}$-system}

\vspace{0.3 cm} \noindent The one-dimensional analog of the $\GL_{2}$-system, defined in terms of commensurability classes of one dimensional $\bQ$-lattices, plays a central role in the theory. This system studied by Bost and Connes in \cite{BostConnes} encodes the arithmetic properties of abelian extensions of $\bQ$ (i.e. cyclotomic extensions and the Kronecker-Weber theorem) and was the first quantum statistical mechanical system of this sort to be studied. This system was originally defined in terms of the Hecke algebra of the pair $(P_{\bQ}, P_{\bZ})$ where $P$ denotes the ``$ax + b$" group given for a commutative ring $R$ by the group of matrices  
 \begin{eqnarray*}
P_{R} &=& \{   \left(\begin{array}{cc} 1 & b \\ 0 & a \end{array}\right)  \, |  \, a \in R^{*}, \, b \in R  \}.
 \end{eqnarray*}

\vspace{0.3 cm} \noindent The pair $(P_{\bQ}, P_{\bZ})$ is almost normal in the sense that the orbits of $P_{\bZ}$ acting on the left on the quotient $P_{\bQ}/ P_{\bZ}$ are finite and it is possible to associate to it a $C^{*}$-algebra $C^{*}_{r}(P_{\bQ}, P_{\bZ})$ which comes equipped with a natural time evolution  (cf. \cite{BostConnes}  for the various definitions). This is the Bost-Connes system also referred as the $\GL_{1}$-system.

\vspace{0.3 cm} \noindent  It turns out (cf. loc. cit) that for each prime number $p$ the local quotients $P_{ \bQ_{p} } / P_{ \bZ_{p} }$ are isomorphic to the Bruhat-Tits building of $\GL_{2}(\bQ_{p})$ and the $C^{*}$-algebra of the $\GL_{1}$-system can be represented on the space of $\ell^{2}$ functions on the set of vertices of the restricted product of these buildings
\begin{eqnarray*}
\Delta &=&  \prod  P_{ \bQ_{p} } / P_{ \bZ_{p} }  \, = \,  P_{ \bA_{f} } / P_{ \widehat{\bZ }}   \, = \,  P_{ \bQ} / P_{ {\bZ }} 
 \end{eqnarray*}

\vspace{0.3 cm} \noindent Taking into account Corollary~\ref{collorary1} we can identify $\Delta$ with the set $\cV$ of vertices of the big picture obtaining a representation of the $\GL_{1}$-system as an algebra of bounded operators acting on $\cH= \ell^{2}(\cV)$.  By Theorem~\ref{theorem1} the $\GL_{2}$-system also acts on the space $\cH= \ell^{2}(\cV)$ and the relation between the two actions is implemented by the correspondence between the two systems defined in \cite{CMR}. This correspondence and its implementation on the big picture essentially account for the role played by the determinant $ \det: \GL_{2} \rightarrow \GL_{1}$ on Shimura's theory relating the arithmetic of the the field of modular functions of level $N$ to that of the cyclotomic field $\bQ(\e ^{2\pi i \frac{1}{N}})$ (cf. \cite{Shimura}). 

\vspace{0.3 cm} 


\section{Arithmetic of groups like $\Gamma_{0}(N)$ and the $\GL_{2}$-system}

 \noindent In this section we explore some aspects of moonshine theory related to congruence subgroups of moonshine type.  As seen above, the main idea behind Conway's big picture is to understand these and similar groups in terms of their action on sets of commensurable lattices. Commensurable lattices up to homothety correspond to isogenous elliptic curves and passing to $\bQ$-lattices adds a labeling of torsion points on these curves. If we view automorphic functions corresponding to congruence subgroups as functions on sets of lattices then passing to $\bQ$-lattices leads to considering the adelic counterpart of these groups and the corresponding Galois action on automorphic functions as encoded in terms of this torsion data.
\vspace{0.3 cm} \noindent

\subsection{The modular field and groups of moonshine type}
\label{moonshinetype}

We start this section by recalling some basic facts about the arithmetic theory of automorphic functions, these will provide a natural bridge between moonshine theory and the arithmetic properties of the $GL_{2}$-system. We refer the reader to \cite{Shimura} for an extended treatment of the results used in this section.

\vspace{0.3 cm} \noindent For a positive integer $N$ we denote by $\cF_{N}$ the field of $\Gamma (N)$-automorphic functions with Fourier coefficients in $\bQ(\zeta_{N})$, $\zeta_{N}$ a primitive $N$-th root of unity. The \emph{modular field} is the limit
 \begin{eqnarray*}
\cF  &=& \lim_{\overrightarrow{n}} \,  \cF_{N}.
\end{eqnarray*}
There is a natural homomorphism
 \begin{eqnarray*}
\tau \, : \, GL_{2}(\bA_{f})  &\longrightarrow & \Aut( \cF)
\end{eqnarray*}
where we denote by $\bA_{f} $ the ring of finite adeles of $\bQ$. To see this recall that the field $\cF$ can be generated  the by Fricke functions $\{f_{a} \ | \,   a\in \bQ^{2}/ \bZ^{2} \}$ where $f_{a}(z)$ is defined in terms of the coordinates of the torsion point corresponding to $a$ in the elliptic curve with period lattice $\bZ+z\Z$ (cf.  \cite{Shimura}). Taking $GL_{2}(\bA_{f}) =  GL_{2}(\widehat{\bZ}) GL_{2}^{+}(\bQ)$ via the embedding of $GL_{2}^{+}(\bQ)$ in $GL_{2}(\bA_{f})$ the above morphism corresponds to:
 \begin{eqnarray*}
\tau(u) \, :\, f_{a} & \mapsto &  f_{ u a} \qquad \text{ for }u\in  GL_{2}(\widehat{\bZ}) \\
\tau(g) \, :\, f_{a} & \mapsto &  f_{ a} \circ g \quad  \text{ for }g\in  GL_{2}^{+}(\bQ)
\end{eqnarray*}
(these are equivalent in $GL_{2}(\widehat{\bZ}) \cap GL_{2}^{+}(\bQ) = SL_{2}(\bZ)$). The homomorphism $\tau$ fits into an exact sequence:
 \begin{eqnarray*}
1 \; \longrightarrow  \;   \bQ^{*}   \;  \longrightarrow    \; GL_{2}(\bA_{f})  \;    \overset {\tau}{\longrightarrow}   \;  \Aut( \cF)   \;  \longrightarrow   \;  1 .
\end{eqnarray*}

\vspace{0.3 cm} \noindent We say that a point $z_{1}\in \mathfrak{H}$ is \emph{generic} if the value $j(z_{1})$ of the $j$-invariant of the elliptic curve $\bC / (z_{1} \bZ + \bZ)$ is transcendental. Given $z_{1}\in  \mathfrak{H}$, generic evaluation at $z_{1}$ defines an isomorphism
\begin{eqnarray*}
\cF \longrightarrow  \cF_{z_{1}} = \{ f(z_{1}) \, | \, f \in \cF \}
\end{eqnarray*}
We call the  subfield $\cF_{z_{1}} \subset \bC$ the specialization of $\cF$ at $z_{1}$ and denote the corresponding Galois action of $GL_{2}(\bA_{f}) /\bQ^{*}$ on $\cF_{z_{1}}$ by $\tau_{z_{1}}$. We say that lattice $\Lambda_{1}$ (resp. a $\bQ$-lattice $(\Lambda_{1}, \phi_{1})$) is generic if it is represented up to scale by  $z_{1}\in \mathfrak{H}$ (resp. $(\rho_{1}, z_{1}) \in M_{2}(\widehat{ \bZ } ) \times \mathfrak{H}$) with  $z_{1}$ a generic point.

\begin{proposition}[cf. \cite{Shimura}]
\label{propshi}
Let $S$ be an open subgroup of $GL_{2}(\bA_{f})$ with $\bQ^{*} \subset S$ such that  $S / \bQ^{*}$ is compact. Then
\begin{enumerate}
  \item The fixed field $\cF_{S} = \cF^{\tau(S)}$ is finitely generated over $\bQ$ and $\cF$ is a Galois extension of $\cF_{S}$ with Galois group $\Gal (\cF / \cF_{S}) = \tau(S)$.
  \item The group $\bar{\Gamma}_{S} = (  GL_{2}^{+}(\bQ)  \cap S) / \bQ^{*}$ is commensurable with $\bar{\Gamma}_{1} = PSL_{2}(\bZ)$.
  \item The field $\bC \cF_{S}$ coincides with the field of all $\bar{\Gamma}_{S}$-automorphic functions on $\mathfrak{H}$.
\end{enumerate}
\end{proposition}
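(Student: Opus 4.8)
\textbf{Proof proposal for Proposition~\ref{propshi}.}

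The plan is to derive all three statements from the classical structure theory of the modular field $\cF$ as developed by Shimura, using the explicit description of $\tau$ in terms of Fricke functions recalled above. For part (1), I would first note that since $S$ is open in $GL_{2}(\bA_{f})$ it contains some principal congruence subgroup $U(N) = \{u \in GL_{2}(\widehat{\bZ}) \mid u \equiv 1 \bmod N\}$, and therefore $\tau(S) \supset \tau(U(N)) = \Gal(\cF/\cF_{N})$; hence $\cF_{S} \subset \cF_{N}$ is contained in a finitely generated field. Since $\cF/\cF_{N}$ is Galois with profinite Galois group $\varprojlim GL_{2}(\bZ/M\bZ)/\{\pm 1\}$, the subgroup $\tau(S)$ is closed (being the image of the compact group $S/\bQ^{*}$), so by Galois theory $\cF/\cF_{S}$ is Galois with group $\tau(S)$. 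Finite generation of $\cF_{S}$ over $\bQ$ then follows because $\cF_{S} \subset \cF_{N}$ and $\cF_{N}$ is finitely generated over $\bQ$ (it is the function field of the modular curve $X(N)$ with a cyclotomic constant field extension), and a subfield of a finitely generated field extension is finitely generated.

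For part (2), the key point is that $GL_{2}^{+}(\bQ) \cap S$ contains $GL_{2}^{+}(\bQ) \cap U(N) \supset \Gamma(N)$, the principal congruence subgroup, which already has finite index in $SL_{2}(\bZ)$; and conversely, $GL_{2}^{+}(\bQ) \cap S$ is contained in $GL_{2}^{+}(\bQ) \cap (S \cap GL_{2}(\widehat{\bZ}) \bQ^{*})$, which after projecting to $\PGL_{2}^{+}(\bQ)$ lands in a set of bounded hyperdistance from $\nu_{1}$ because the compactness of $S/\bQ^{*}$ bounds the relevant determinants $p$-adically at all but finitely many primes. Thus $\bar{\Gamma}_{S}$ sits between a finite-index subgroup of $\PSL_{2}(\bZ)$ and a group that contains $\PSL_{2}(\bZ)$ with finite index, giving commensurability. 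Alternatively, and more cleanly, one can invoke the standard fact (Shimura) that $GL_{2}^{+}(\bQ) \cap U$ is commensurable with $SL_{2}(\bZ)$ for any compact open $U \subset GL_{2}(\widehat{\bZ})$ and deduce the general case by writing $S = \bigcup \bQ^{*}(S \cap GL_{2}(\widehat{\bZ}))$.

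For part (3), I would use the Galois correspondence from (1) together with the explicit action of $\tau$ restricted to $GL_{2}^{+}(\bQ)$: since $\tau(g)$ acts on Fricke functions by $f_{a} \mapsto f_{a} \circ g$, the functions in $\cF$ fixed by $\tau(GL_{2}^{+}(\bQ) \cap S)$ that also have algebraic Fourier coefficients correspond exactly to $\bar{\Gamma}_{S}$-invariant modular functions; extending scalars to $\bC$ absorbs the constant-field subtlety, so $\bC\cF_{S}$ becomes the full field of meromorphic $\bar{\Gamma}_{S}$-automorphic functions on $\mathfrak{H}$. Concretely: $\bC\cF_{S} \subset \bC\cF = \bigcup_{N} \bC\cF_{N}$, and $\bC\cF_{N}$ is the function field of $X(N)_{\bC}$; the elements fixed by $\tau(S)$ are precisely those fixed by $\tau(GL_{2}^{+}(\bQ)\cap S)$ after the cyclotomic part of $\tau(S)$ is killed by passing to $\bC$, and $\tau(GL_{2}^{+}(\bQ)\cap S)$-invariance is by definition $\bar{\Gamma}_{S}$-invariance.

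The main obstacle I anticipate is part (3), specifically the bookkeeping of how the constant-field (cyclotomic) directions inside $\tau(S)$ interact with the geometric $\PGL_{2}^{+}(\bQ)$-action: one must check carefully that $\bC\cF_{S}$ is not strictly larger than the $\bar{\Gamma}_{S}$-invariant subfield of $\bC\cF$, which amounts to verifying that $\tau(S)$ and $\tau(\bQ^{*}\cdot(GL_{2}^{+}(\bQ)\cap S))$ have the same image in $\Aut(\cF)/\Gal(\bC\cF/\bC\cF \cap \overline{\bQ}(\mathfrak{H}))$ — essentially the statement that $S \cdot \bQ^{*}$ and $\bQ^{*}\cdot(GL_{2}^{+}(\bQ)\cap S)\cdot GL_{2}(\widehat{\bZ})$ are "close enough." This is exactly the content of the classical theory in \cite{Shimura}, so in practice I would cite it rather than reprove it, and the proof reduces to assembling parts (1) and (2) and quoting Shimura's main theorem on the field of automorphic functions.
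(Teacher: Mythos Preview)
The paper does not prove this proposition at all: it is stated with the attribution ``cf.\ \cite{Shimura}'' and no argument is given. Your sketch is therefore not competing with anything in the paper; it is a reconstruction of the classical Shimura argument, and as such it is broadly correct and ends exactly where the paper begins, namely by citing \cite{Shimura}.

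Two small corrections to your write-up. First, the profinite group you wrote down for $\Gal(\cF/\cF_{N})$ is actually (a version of) $\Gal(\cF/\cF_{1})$; for general $N$ you get the appropriate open subgroup, though this does not affect the logic. Second, in part~(2) the decomposition $S = \bigcup \bQ^{*}(S\cap GL_{2}(\widehat{\bZ}))$ is not valid in general: a compact open subgroup of $GL_{2}(\bA_{f})/\bQ^{*}$ need not lie in the image of $GL_{2}(\widehat{\bZ})$. The clean way to get the upper bound on $\bar{\Gamma}_{S}$ is the one you almost wrote: $S/\bQ^{*}$ is compact and contains the open subgroup $U(N)\bQ^{*}/\bQ^{*}$, so the index $[S:\bQ^{*}U(N)]$ is finite; intersecting with $GL_{2}^{+}(\bQ)$ and using $GL_{2}^{+}(\bQ)\cap \bQ^{*}U(N)=\bQ^{*}\Gamma(N)$ gives $[\bar{\Gamma}_{S}:\overline{\Gamma(N)}]<\infty$, which together with $\overline{\Gamma(N)}\subset \bar{\Gamma}_{S}$ yields commensurability with $\PSL_{2}(\bZ)$. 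Your identification of the genuine subtlety in part~(3) --- matching the cyclotomic part of $\tau(S)$ against the geometric $\bar{\Gamma}_{S}$-action after base change to $\bC$ --- is exactly right, and is precisely the content one cites from \cite{Shimura}.
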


\vspace{0.3 cm} \noindent As above we identify elements of $PGL_{2}^{+}(\bQ)$ with the corresponding fractional linear transformations on $\mathfrak{H}$. In what follows we will restrict our attention to subgroups $\Gamma \subset GL_{2}^{+}(\bQ)$ such that $\bar{\Gamma}$ satisfies the following properties:
\begin{enumerate}
  \item There exist an $N$ such that $\bar{\Gamma}$ contains $\bar{\Gamma}_{0}(N)$ with finite index.
  \item 
  \label{cond2} The map $z \mapsto z + k $ belongs to $\bar{\Gamma}$ if and only if $k \in \bZ$.
  \item $\bar{\Gamma}$ is a genus zero congruence group, that is $X_{\Gamma} = \widetilde{\mathfrak{H} } / \bar{\Gamma} $ is a compact Riemann surface of genus zero where $ \widetilde{\mathfrak{H} } = \mathfrak{H} \cup \bQ \cup \{ i \infty \}$.
  
\end{enumerate}
We will call such a group a \emph{congruence subgroup of moonshine type}\footnote{This definition is local to this article. In \cite{ConwayMcKaySebbar} Conway, McKay and Sebbar intrinsically characterized the $171$ groups occurring in Monstrous Moonshine, these groups fall into our definition of ``moonshine-type" and it might be convenient at a later stage to restrict our attention to them.}. Having genus zero we have that $ X_{\Gamma} \simeq \bP ^{1}$ and the field of $\Gamma$-automorphic functions can be generated by one element $f_{\Gamma}$, we will call such generator a \emph{principal modulus} of $\Gamma$. Because of condition \ref{cond2} above the principal modulus $f_{\Gamma}$ admits a Fourier expansion in powers of $q=\e ^{2 \pi i z}$. As mentioned in Section~\ref{section1} character values of the Monster group $\bM$ give rise to generating series which by the monstrous moonshine theorem coincide with the Fourier series of principal moduli of congruence subgroups satisfying the above properties.

\vspace{0.3 cm} \noindent The following converse to Proposition~\ref{propshi} will be central to our study of congruence subgroup of moonshine-type and their principal moduli in relation to the $GL_{2}$-system:

\begin{proposition}
\label{SuperShimura}
Let $\Gamma$ be a congruence subgroup. Then there exist an open subgroup $S_{\Gamma}$ of $GL_{2}(\bA_{f})$ with $\bQ^{*} \subset S_{\Gamma}$ and $S_{\Gamma} / \bQ^{*}$ compact such that the group $\bar{\Gamma}$ is isomorphic to $ (GL_{2}^{+}(\bQ)  \cap S_{\Gamma}) / \bQ^{*}$. If $\Gamma$ is of moonshine-type and $f_{\Gamma}$ is a principal modulus for $\Gamma$ with rational Fourier coefficients then fixed field of $\tau(S_{\Gamma})$ is given by $\cF_{S_{\Gamma}} = \bQ (f_{\Gamma})$.
\end{proposition}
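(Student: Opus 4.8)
The plan is to produce $S_\Gamma$ by an explicit adelic construction, read off the first assertion from Shimura's correspondence, and then identify $\cF_{S_\Gamma}$ by combining Proposition~\ref{propshi}(3) with a Galois descent carried out on $q$-expansions.

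\textit{Construction of $S_\Gamma$.} Fix $N$ with $\Gamma(N)\subseteq\Gamma$ of finite index (in the moonshine-type case with also $\bar\Gamma_0(N)\subseteq\bar\Gamma$), put $m=[\bar\Gamma:\bar\Gamma(N)]$ and pick representatives $\gamma_1=1,\dots,\gamma_m\in\Gamma$ of the cosets $\bar\Gamma(N)\backslash\bar\Gamma$. Let $\widehat\Gamma(N)=\ker(SL_2(\widehat\bZ)\to SL_2(\bZ/N))$, let $T=\{\mathrm{diag}(u,1):u\in\widehat\bZ^{*}\}$, and let $S_\Gamma$ be the subgroup of $GL_2(\bA_f)$ generated by $\bQ^{*}$, $\widehat\Gamma(N)$, $T$ and $\gamma_1,\dots,\gamma_m$. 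Then $U:=\widehat\Gamma(N)\cdot T$ is a compact open subgroup of $GL_2(\widehat\bZ)$ (as $T$ normalizes $\widehat\Gamma(N)$), so $S_\Gamma$ is open; one checks $S_\Gamma\cap GL_2^+(\bQ)=\bQ^{*}\Gamma$ — the inclusion $\supseteq$ is immediate, and $\subseteq$ follows, as in \cite{Shimura}, from $U\cap GL_2^+(\bQ)=\Gamma(N)$, from the fact that the closure of $\Gamma(N)$ in $GL_2(\bA_f)$ meets $GL_2^+(\bQ)$ only in $\Gamma(N)$ (strong approximation), and from a determinant computation showing that a rational matrix of positive determinant lying in a coset $\gamma_i\bQ^{*}U$ already lies in $\bQ^{*}\gamma_i\Gamma(N)$; and $S_\Gamma/\bQ^{*}$ is compact because $\bar\Gamma$ is commensurable with $PSL_2(\bZ)$ and $\bA_f^{*}/\bQ^{*}\cong\widehat\bZ^{*}/\{\pm1\}$ is compact. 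Thus $\bar\Gamma\cong(GL_2^+(\bQ)\cap S_\Gamma)/\bQ^{*}$, proving the first assertion. The point of including the torus $T$ is that $\det(S_\Gamma)\supseteq\det(T)=\widehat\bZ^{*}$, while the same determinant argument shows this enlargement changes neither $S_\Gamma\cap GL_2^+(\bQ)$ nor the compactness of $S_\Gamma/\bQ^{*}$. Consequently the composite $S_\Gamma\xrightarrow{\det}\bA_f^{*}\twoheadrightarrow\bA_f^{*}/\bQ_{>0}=\Gal(\bQ^{ab}/\bQ)$ (the Artin map) is surjective, so the field of constants of $\cF_{S_\Gamma}=\cF^{\tau(S_\Gamma)}$ is $\bQ^{ab}\cap\cF_{S_\Gamma}=\bQ$, and moreover every element of $\cF_{S_\Gamma}$ has a rational $q$-expansion, being fixed by $\tau(T)$, which induces the full cyclotomic Galois action on the $q$-expansion coefficients.

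\textit{The inclusion $\bQ(f_\Gamma)\subseteq\cF_{S_\Gamma}$.} Since $\Gamma\supseteq\Gamma(N)$ and $f_\Gamma$ has rational $q$-expansion, $f_\Gamma$ is a $\Gamma(N)$-automorphic function with Fourier coefficients in $\bQ\subseteq\bQ(\zeta_N)$, hence $f_\Gamma\in\cF_N\subseteq\cF$. To see $f_\Gamma\in\cF^{\tau(S_\Gamma)}$ it suffices to check fixedness on generators: $\tau$ is trivial on $\bQ^{*}$; $\tau(\gamma_i)f_\Gamma=f_\Gamma\circ\gamma_i=f_\Gamma$ as $f_\Gamma$ is $\bar\Gamma$-automorphic and $\gamma_i\in\Gamma$; and $\tau(U)$ fixes $f_\Gamma$ because $U$ lies in $K_0(N)$, the subgroup of $GL_2(\widehat\bZ)$ of matrices upper triangular modulo $N$, and $f_\Gamma$ — a $\Gamma_0(N)$-modular function with rational $q$-expansion — lies in the fixed field $\cF^{\tau(\bQ^{*}K_0(N))}$, the function field of the canonical $\bQ$-model of $X_0(N)$ (the cusp $\infty$ being $\bQ$-rational; see \cite{Shimura}). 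Hence $\bQ(f_\Gamma)\subseteq\cF_{S_\Gamma}$.

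\textit{The inclusion $\cF_{S_\Gamma}\subseteq\bQ(f_\Gamma)$.} Since $\bar\Gamma_{S_\Gamma}=(GL_2^+(\bQ)\cap S_\Gamma)/\bQ^{*}=\bar\Gamma$, Proposition~\ref{propshi}(3) gives that $\bC\,\cF_{S_\Gamma}$ is the field of $\bar\Gamma$-automorphic functions, which equals $\bC(f_\Gamma)$ because $\Gamma$ is of genus zero with principal modulus $f_\Gamma$. So any $h\in\cF_{S_\Gamma}$ is $h=R(f_\Gamma)$ for some $R\in\bC(t)$. Passing to $q$-expansions inside $\bC((q))$: the expansion $\widehat f$ of $f_\Gamma$ lies in $\bQ((q))$ and is transcendental over $\bC$ (it is nonconstant), the expansion $\widehat h$ of $h$ lies in $\bQ((q))$ by the previous paragraph, and $\widehat h=R(\widehat f)$. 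For any $\phi\in\Aut(\bC/\bQ)$, applying the induced coefficientwise automorphism of $\bC((q))$ — which fixes $\widehat h$ and $\widehat f$ — gives $R^{\phi}(\widehat f)=R(\widehat f)$, whence $R^{\phi}=R$ by transcendence of $\widehat f$; since $\bC^{\Aut(\bC/\bQ)}=\bQ$ this forces $R\in\bQ(t)$, i.e.\ $h\in\bQ(f_\Gamma)$. Combining the two inclusions, $\cF_{S_\Gamma}=\bQ(f_\Gamma)$.

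The step I expect to be the real obstacle is the construction: $S_\Gamma$ must be at once adapted to $\bar\Gamma$ over $\bQ$ (which pushes it to be small) and have determinant surjecting onto $\Gal(\bQ^{ab}/\bQ)$ (which pushes it to be large). The tension is resolved by the determinant obstruction above — a rational matrix of positive determinant cannot absorb an adelic factor of non-square determinant — and it is precisely the surjectivity of $\det$ that forces the constant field of $\cF_{S_\Gamma}$ down from $\bC$ to $\bQ$, thereby separating $\bQ(f_\Gamma)$ from $\bC(f_\Gamma)$.
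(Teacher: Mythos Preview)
The paper does not prove this proposition: it is stated without argument as a converse to Proposition~\ref{propshi}, with implicit reference to \cite{Shimura}. So there is nothing in the paper to compare against; I can only assess your proof on its own.

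Your second half is fine. The inclusion $\bQ(f_\Gamma)\subseteq\cF_{S_\Gamma}$ via the canonical $\bQ$-model of $X_0(N)$, and the reverse inclusion via Proposition~\ref{propshi}(3) together with Galois descent on $q$-expansions, are exactly the right mechanisms and are carried out correctly---\emph{provided} $S_\Gamma$ has the properties claimed in the first half.

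That is where the gap lies. Both the identification $S_\Gamma\cap GL_2^+(\bQ)=\bQ^*\Gamma$ and the compactness of $S_\Gamma/\bQ^*$ tacitly assume that the group you generate equals the finite union $\bigcup_i\gamma_i\,\bQ^*U$, i.e.\ that the $\gamma_i$ normalize $U=\widehat\Gamma(N)\cdot T$. They do not. Even for $\gamma=\left(\begin{smallmatrix}a&b\\c&d\end{smallmatrix}\right)\in SL_2(\bZ)$, which normalizes $\widehat\Gamma(N)$, one computes
\[
\gamma\,\mathrm{diag}(u,1)\,\gamma^{-1}\,\mathrm{diag}(u,1)^{-1}\;\equiv\;\begin{pmatrix}1&ab(1-u)\\0&1\end{pmatrix}\pmod N,
\]
which lies in $\widehat\Gamma(N)$ only when $ab\equiv 0$; and an Atkin--Lehner involution $W_e$ conjugates $T$ to the opposite torus $\{\mathrm{diag}(1,u)\}\not\subset U$. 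Hence the group generated is strictly larger than $\bigcup_i\gamma_i\,\bQ^*U$, and you have established neither its intersection with $GL_2^+(\bQ)$ nor its compactness modulo $\bQ^*$. Your one-line justification (``$\bar\Gamma$ commensurable with $PSL_2(\bZ)$ and $\bA_f^*/\bQ^*$ compact'') does not bridge this.

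One repair, in the moonshine-type case: replace $U$ by $K_0(N)$ itself (which already has $\det K_0(N)=\widehat\bZ^*$, so no separate torus is needed) and take the $\gamma_i$ to be coset representatives of $\bar\Gamma_0(N)$ in $\bar\Gamma$. When the $\gamma_i$ normalize $\Gamma_0(N)$---as they do for the actual moonshine groups, which sit between $\bar\Gamma_0(N)$ and its normalizer---they also normalize $K_0(N)$ (this is the adelic content of Atkin--Lehner theory), so $S_\Gamma=\bQ^*\bigcup_i\gamma_i K_0(N)$ \emph{is} a group, visibly compact modulo $\bQ^*$, with $S_\Gamma\cap GL_2^+(\bQ)=\bQ^*\Gamma$, and your $q$-expansion argument then applies verbatim. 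For the general first assertion one must argue differently, for instance by first passing to a level $M$ with $\Gamma(M)\triangleleft\Gamma$ and then choosing a compact open that all of $\Gamma$ normalizes, or by defining $S_\Gamma$ Galois-theoretically as in \cite{Shimura} and reading the group-theoretic properties off the correspondence.
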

\noindent Note that in this case the field $\bC \cF_{S_{\Gamma}}$ coincides with $\bC(f_{\Gamma})$, the function field of $X_{\Gamma}$.


\subsection{Arithmetic properties of the $GL_{2}$-system}

\vspace{0.3 cm} \noindent In the context of quantum statistical mechanical systems the counterparts of special values occurring in analytic number theory are given by expectation values of arithmetic observables at equilibrium states of the system. Relevant Galois groups arise then as groups of symmetries of the system whose action on equilibrium states commutes with the action on these special values.

\vspace{0.3 cm} \noindent The notion of equilibrium states of a quantum statistical mechanical system is given in terms of the Kubo-Martin-Schwinger condition which we now recall. A \emph{state} $\varphi$ on a $C^{*}$-algebra $\cA$ is by definition a positive linear functional of norm one. Given a quantum statistical mechanical system $(\cA, \sigma_{t})$, a state $\varphi$ on $\cA$ satisfies the Kubo-Martin-Schwinger condition at inverse temperature $\beta \in (0 , \infty)$ if for every $a, b \in \cA$ there exist a continuous function:
\begin{eqnarray*}
f : \{ z\in \mathbb{C} \mid 0 \leq \Im(z) \leq \beta \} \longrightarrow \bC,
\end{eqnarray*}
holomorphic in the interior of the strip, and such that
\begin{eqnarray*}
f(t) = \varphi (a \sigma_t (b)) &\text{and} & f(t + i \beta) = \varphi (\sigma_t (b) a)
\end{eqnarray*}
for all $t\in \bR$.  We call such state a $KMS_{\beta}$ state. A $KMS_\infty$ state is by definition a weak limit of $KMS_{\beta}$ states as $\beta \rightarrow \infty$. For all $\beta$ the set of $KMS_{\beta}$ states  forms a Choquet simplex. We denote the set of extremal $KMS_{\beta}$ states by $\cE_{\beta}$. We refer the reader to \cite{Bratteli} for various generalities about quantum statistical mechanical systems and their $KMS$-states.

\vspace{0.3 cm} \noindent In the case of the $GL_{2}$-system the structure of $KMS$-states is well understood. The  $GL_{2}$-system exhibits symmetry breaking at inverse temperatures $\beta=1$ and $\beta=2$ having no $KMS_{\beta}$ states in the range $0< \beta < 1$ and a single $KMS_{\beta}$ in the range $1 \leq \beta \leq 2 $ while for low temperatures $ 2 < \beta $ the set of extremal states can be identified with the set of invertible $\bQ$-lattices and admits a transitive action of  $GL_{2}(\widehat{\bZ}) $ (cf. \cite{ConnesMarcolli1, LacaEtAl1}). Given an invertible $\bQ$-lattice  $(\Lambda_{1},\phi_{1})$ represented up to scale by $y_{1} = (\rho_{1}, z_{1})$ we will denote by $\varphi_{y_{1}} \in \cE_{\infty}$ the associated extremal $KMS$ state at zero temperature.

\vspace{0.3 cm} \noindent The arithmetic of the  $GL_{2}$-system is encoded via an \emph{arithmetic sub-algebra of unbounded multipliers} $\cA_{\bQ}$ which is given by functions on 
 \begin{eqnarray*}
\cY &=& \{  (g,  \rho, z) \in GL_{2}^{+}(\bQ) \times M_{2}(\widehat{ \bZ }) \times \mathfrak{H} \; | \; g  \rho \in M_{2}(\widehat{ \bZ }) \}
\end{eqnarray*}
satisfying various finiteness conditions. More precisely:
\begin{definition} A $\bC$-valued continuous function on $\cY$ such that
\begin{eqnarray*}
f(  \gamma g,  \rho, z) = f(g, \rho,  z) &\text{   and    }&  f(g  \gamma ,  \rho, z) = f(g, \gamma  \rho,  \gamma z)\, ,\; \forall \gamma  \in \Gamma_{1} =\SL_{2}(\bZ)
\end{eqnarray*}
is \emph{arithmetic} if it satisfies the following properties:
\begin{itemize}
  \item $f$ has finite support on $g$,
  \item $f$ has finite level in $\rho$,
  \item for all $(g,  \rho) \in GL_{2}^{+}(\bQ) \times M_{2}(\widehat{ \bZ })$ the function $z \mapsto f(  g,  \rho, z)$ belongs to $\cF$,
\item for any $u\in  GL_{2}(\widehat{\bZ})$ we have $ f(  g,  u \rho, z) = \tau(u)  f(  g',  \rho, z)$ where $gu = u'g' \in GL_{2}(\widehat{\bZ})  GL_{2}^{+}(\bQ)$.
\end{itemize}
\end{definition}
\vspace{0.3 cm} \noindent  The set of all arithmetic functions becomes a $*$-algebra $\cA_{\bQ}$ whose product and involution are given as for elements of $\cA$. The product of an arithmetic element with an element of $\cA$ belongs to $\cA$ and we can view the algebra $\cA_{\bQ}$ as a subalgebra of the algebra of unbounded multipliers of $\cA$. In particular it is possible to evaluate states of the algebra $\cA$ at elements of $\cA_{\bQ}$.

\vspace{0.3 cm} \noindent  As shown in \cite{ConnesMarcolli1} given a extremal KMS state at zero temperature $\varphi \in  \cE_{\infty}$ corresponding to a generic invertible $\bQ$-lattice represented up to scale by $(\rho_{1}, z_{1})$ the values $\varphi (\cA_{\bQ})$ generate the specialization of the modular field $\cF$ at $z_{1}$. The Galois action of $GL_{2}(\bA_{f}) $ is then implemented by an action by symmetries on the $GL_{2}$-system whose induced action on states commutes with the Galois action on $\cF$ once passing to this specialization.

\vspace{0.3 cm}


\subsection{Subgroups of moonshine type and symmetries of the $\GL_{2}$-system} $\,$

\vspace{0.3 cm} \noindent
The group $GL_{2}(\bA_{f}) $ acts as a group of symmetries of the $GL_{2}$-system and this action induces an action on its KMS states.  Given a congruence subgroup $\Gamma$ we will be interested in the action of $S_{\Gamma} \subset GL_{2}(\bA_{f})$ both on the system and its KMS states. In what follows we briefly recall some properties of this adelic action referring to \cite{ConnesMarcolli1} for details, after this we restrict to the action of the above adelic counterparts of moonshine groups.

\vspace{0.3 cm} \noindent
The key step in understanding the symmetries of the $GL_{2}$-system comes from the fact that the algebra of observables $\cA$ can be realized as a ``full corner" of an algebra of $\Gamma_{1}\times\Gamma_{1}$ invariant functions on the space
\begin{eqnarray*}
\cX &=& GL_{2}^{+}(\bQ) \times M_{2}(\bA_{f}) \times \mathfrak{H} 
\end{eqnarray*}
The full corner giving the algebra of the $GL_{2}$-system corresponds to the projection on the space of functions with support on $\cY$. The action of $GL_{2}(\bA_{f}) $ is then implemented via multiplication in the second factor of the above product. Because of the required compatibility with the projection defining $\cA$ the action of $GL_{2}(\bA_{f})$ will correspond then to an action by algebra automorphisms of $GL_{2}(\widehat{\bZ}) $ together with an action by algebra endomorphisms of $M_{2}^{+}(\bZ)$.

\vspace{0.3 cm} \noindent  In the same manner we can define a natural action $\vartheta$ of thegroup $GL_{2}(\bA_{f}) / \bQ^{*}$ on the arithmetic sub-algebra of unbounded multipliers $\cA_{\bQ}$. As mentioned above this action will be intertwined with the Galois action on the specialization of $\cF$ corresponding to a generic invertible $\bQ$-lattice via the $GL_{2}(\bA_{f})$-action on states.

\vspace{0.3 cm} \noindent Let $\Gamma$ be a congruence subgroup and let $S_{\Gamma}$  be the corresponding open subgroup of $GL_{2}(\bA_{f})$ as in Proposition~\ref{SuperShimura}. We will refer elements in
\begin{eqnarray*}
\cA_{\bQ}^{\Gamma} &=& \{ f \in \cA_{\bQ} \, | \,  \vartheta_{u} f = f \text{ for all }   u \in S_{\Gamma} \}
\end{eqnarray*}
as $\Gamma$-\emph{automorphic arithmetic multipliers} of the system $(\cA, \sigma_{t})$. We have the following:

\begin{theorem}
Let $\Gamma$ be a congruence subgroup of moonshine type with principal modulus $f_{\Gamma}$. Then for any invertible generic $\bQ$-lattice  $(\Lambda_{1},\phi_{1})$ represented up to scale by $y_{1}=  (\rho_{1}, z_{1})$ with associated KMS-state $\varphi_{y_{1}} \in  \cE_{\infty}$ we have
\begin{enumerate}
  \item the values $\varphi(\cA_{\bQ}^{\Gamma})$ generate the field $\cF_{\Gamma,z_{1}} = \bQ(f_{\Gamma}(z_{1}))$ over $\bQ$,
  \item  for all $u \in S_{\Gamma} $ the element $\tau_{z_{1}}( \rho_{1} u \rho_{1}^{-1} )$ acts as the identity in $\cF_{\Gamma,z_{1}}$.
\end{enumerate}
\end{theorem}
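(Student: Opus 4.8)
The plan is to deduce this theorem essentially as a packaging of Proposition~\ref{SuperShimura} together with the arithmetic properties of the $GL_2$-system recalled above. First I would recall the key statement from \cite{ConnesMarcolli1}: for a generic invertible $\bQ$-lattice represented up to scale by $y_1 = (\rho_1, z_1)$, the values $\varphi_{y_1}(\cA_\bQ)$ generate the specialization $\cF_{z_1}$ of the modular field at $z_1$, and the symmetry action $\vartheta$ of $GL_2(\bA_f)/\bQ^*$ on $\cA_\bQ$ is intertwined, via $\varphi_{y_1}$, with the Galois action $\tau_{z_1}$ on $\cF_{z_1}$ — more precisely, evaluating $\vartheta_u f$ at $\varphi_{y_1}$ gives $\tau_{z_1}(\rho_1 u \rho_1^{-1})$ applied to $\varphi_{y_1}(f)$ (the conjugation by $\rho_1$ being the usual bookkeeping that appears because the symmetry acts on the labeling $\rho$ of torsion points rather than directly on $\cF$). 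Granting this, I would argue as follows.

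For part (1): by definition $\cA_\bQ^\Gamma$ consists of those $f\in\cA_\bQ$ fixed by $\vartheta_u$ for all $u\in S_\Gamma$. Applying $\varphi_{y_1}$ and using the intertwining, $\varphi_{y_1}(\cA_\bQ^\Gamma)$ is contained in the subfield of $\cF_{z_1}$ fixed by $\tau_{z_1}(\rho_1 S_\Gamma \rho_1^{-1})$. Since $\rho_1\in GL_2(\widehat{\bZ})$ (invertibility) and $\bQ^*\subset S_\Gamma$, conjugation by $\rho_1$ carries $S_\Gamma$ to another open subgroup with the same intersection-with-$GL_2^+(\bQ)$ class data, and in any case the fixed field of $\tau(S_\Gamma)$ is $\cF_{S_\Gamma} = \bQ(f_\Gamma)$ by Proposition~\ref{SuperShimura}; specializing at the generic point $z_1$ via the isomorphism $\cF\to\cF_{z_1}$ this becomes $\bQ(f_\Gamma(z_1)) = \cF_{\Gamma,z_1}$. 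The reverse inclusion — that $\varphi_{y_1}(\cA_\bQ^\Gamma)$ generates all of $\cF_{\Gamma, z_1}$, not merely lands inside it — requires producing enough $\Gamma$-automorphic arithmetic multipliers: here one uses that $f_\Gamma\in\cF$ is itself $\tau(S_\Gamma)$-invariant, lifts it (or a suitable $GL_2(\widehat\bZ)$-average built from the Fricke functions generating $\cF$) to an element of $\cA_\bQ$ by the fourth defining property of arithmetic functions, projects onto the $S_\Gamma$-invariants, and checks the value under $\varphi_{y_1}$ is $f_\Gamma(z_1)$ up to the $\rho_1$-twist; since $\rho_1$ is invertible this twist is an automorphism of $\cF_{z_1}$ preserving $\bQ$, so the generated field is unchanged.

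For part (2): this is the infinitesimal/fixed-point shadow of the same intertwining. If $f\in\cA_\bQ^\Gamma$ then $\vartheta_u f = f$ for $u\in S_\Gamma$, hence $\varphi_{y_1}(\vartheta_u f) = \varphi_{y_1}(f)$; by the intertwining the left side is $\tau_{z_1}(\rho_1 u \rho_1^{-1})\,\varphi_{y_1}(f)$. Since by part (1) the elements $\varphi_{y_1}(f)$ for $f$ ranging over $\cA_\bQ^\Gamma$ generate $\cF_{\Gamma,z_1}$ over $\bQ$, and $\tau_{z_1}(\rho_1 u\rho_1^{-1})$ is a field automorphism fixing $\bQ$, it must fix every element of $\cF_{\Gamma,z_1}$, i.e. acts as the identity there. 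The main obstacle, and the step deserving the most care, is the first one: correctly transcribing the Connes–Marcolli intertwining statement — in particular the exact placement of the conjugation by $\rho_1$ and the fact that the symmetry action on $\cA_\bQ$ is only an action by endomorphisms on the $M_2^+(\bZ)$-part, so that one must check that restricting to $S_\Gamma$ (whose finite part meets $GL_2(\widehat\bZ)$ in an open subgroup) really does give honest automorphisms fixing $f_\Gamma$ — and then verifying that genericity of $z_1$ plus the isomorphism $\cF\cong\cF_{z_1}$ transports Proposition~\ref{SuperShimura}'s identity $\cF_{S_\Gamma}=\bQ(f_\Gamma)$ to the specialized field without loss. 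Everything else is formal manipulation with Galois correspondences.
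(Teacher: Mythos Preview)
The paper does not actually supply a proof of this theorem: it is stated immediately after the definition of $\cA_{\bQ}^{\Gamma}$ and the paragraph recalling the $GL_2(\bA_f)$-symmetries, and the text then moves directly on to the next proposition. So there is no ``paper's proof'' to compare against; the theorem is positioned as a direct consequence of the surrounding material --- Proposition~\ref{SuperShimura}, the Connes--Marcolli intertwining of the symmetry action $\vartheta$ with the Galois action $\tau$, and the definition of $\cA_{\bQ}^{\Gamma}$ as the $S_{\Gamma}$-fixed multipliers.

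Your proposal is exactly the argument the paper's exposition is pointing at, and it is correct in outline. Part~(2) is indeed formal once the intertwining formula is in hand, and your derivation of it from part~(1) is clean. For part~(1), the containment $\varphi_{y_1}(\cA_{\bQ}^{\Gamma}) \subset \cF_{z_1}^{\tau_{z_1}(\rho_1 S_\Gamma \rho_1^{-1})}$ and the identification of the latter via Proposition~\ref{SuperShimura} is the right strategy; your sketch of the reverse inclusion, lifting $f_\Gamma$ to an element of $\cA_{\bQ}^{\Gamma}$ through the Fricke-function description, is also the intended mechanism. The one point where your write-up is a bit quick is the sentence ``since $\rho_1$ is invertible this twist is an automorphism of $\cF_{z_1}$ preserving $\bQ$, so the generated field is unchanged'': an automorphism of $\cF_{z_1}$ over $\bQ$ can certainly move a given subfield, so to conclude that the fixed field of $\tau_{z_1}(\rho_1 S_\Gamma \rho_1^{-1})$ is literally $\bQ(f_\Gamma(z_1))$ rather than a $\tau_{z_1}(\rho_1)$-translate of it, you need to invoke that the $\bC$-span $\bC\cF_{S_\Gamma}$ depends only on $\bar\Gamma$ (Proposition~\ref{propshi}(3)) and then use rationality of the Fourier coefficients of $f_\Gamma$ to pin down the $\bQ$-structure. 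You have already flagged this conjugation bookkeeping as the main obstacle, which is accurate; it just deserves one more line of justification than you give it.
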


\vspace{0.3 cm} \noindent We can now return to the results of Section~\ref{section4} and study the relation between the $\PGL_{2}(\bQ)$-action on $\cH = \ell^{2}(\cV)$ and the symmetries of the $\GL_{2}$-system. This relationship is encoded in the following

\begin{proposition}
Let $\Gamma =  \Gamma_{0}(N)$ and let $S_{\Gamma}$ be the corresponding open subgroup of $GL_{2}(\bA_{f})$ viewed as a group of symmetries of the $\GL_{2}$-system. Then for any element $a$ of the  algebra of $\Gamma$-automorphic arithmetic multipliers $\cA_{\bQ}^{\Gamma}$ and any positive $N$ we have 
\begin{eqnarray*}
a P_{\mathfrak{t}_{N}} = P_{\mathfrak{t}_{N}} a  &, & a P_{\mathfrak{s}_{N}} = P_{\mathfrak{s}_{N}} a  
\end{eqnarray*}
where $P_{\mathfrak{t}_{N}}$ and $P_{\mathfrak{s}_{N}}$ are the projections onto the subspaces of $\cH=\ell^{2}(\cV)$ corresponding to the $(N|1)$-thread and $(N|1)$-snake. 
\end{proposition}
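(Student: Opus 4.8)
The plan is to translate the statement into two commuting conditions: an operator $a$ arising from a $\Gamma_0(N)$-automorphic arithmetic multiplier commutes with the projections $P_{\mathfrak{t}_N}$ and $P_{\mathfrak{s}_N}$ if and only if it preserves the subspaces $\cH_{\mathfrak{t}_N}$ and $\cH_{\mathfrak{s}_N}$ (together with their orthogonal complements). Since $\mathfrak{t}_N$ and $\mathfrak{s}_N$ are finite subgraphs of $\cB$, these subspaces are finite-dimensional and the projections are orthogonal, so the commutation $[a,P]=0$ is equivalent to $a$ mapping both $\cH_P$ and $\cH_P^\perp$ into themselves. Thus it suffices to show that the action of $a$ on $\cH=\ell^2(\cV)$ is block-diagonal with respect to the decomposition $\cH = \cH_{\mathfrak{t}_N}\oplus\cH_{\mathfrak{t}_N}^\perp$ (and similarly for $\mathfrak{s}_N$).

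First I would recall from the end of Section~\ref{Section2} and from the discussion following Theorem~\ref{theorem1} that $\mathfrak{t}_N$ is the $(1|N)$-thread, fixed pointwise by $\overline{\Gamma}_0(N)$, while $\mathfrak{s}_N$ is the $(N|1)$-snake, also stabilized by $\overline{\Gamma}_0(N)$; in the representation $\pi$ the restriction of the $\PGL_2(\bQ)$-action to $\overline{\Gamma}_0(N)$ acts trivially on $\cH_{\mathfrak{t}_N}$ and $\cH_{\mathfrak{s}_N}$. The second step is to unwind the definition of $\cA_{\bQ}^{\Gamma}$ for $\Gamma=\Gamma_0(N)$: an element $a$ here is $\vartheta_u$-invariant for all $u\in S_{\Gamma}$, and by Proposition~\ref{SuperShimura} the group $\bar\Gamma = \bar\Gamma_0(N)$ is precisely $(GL_2^{+}(\bQ)\cap S_{\Gamma})/\bQ^{*}$. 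The key point is that the $S_{\Gamma}$-invariance of $a$, once transported through the realization of $\cA$ as a full corner of $\Gamma_1\times\Gamma_1$-invariant functions on $\cX = GL_2^{+}(\bQ)\times M_2(\bA_f)\times\mathfrak{H}$, forces the operator $\pi(a)$ to intertwine the rational part of the $S_\Gamma$-action, which on $\cH=\ell^2(\cV)$ is exactly the $\overline{\Gamma}_0(N)$-action by fractional linear transformations on projective classes of lattices.

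Concretely, I would argue that for $a\in\cA_{\bQ}^{\Gamma}$ the operator $\pi(a)$ commutes with the unitaries $\pi(\gamma)$, $\gamma\in\overline{\Gamma}_0(N)$: the $S_\Gamma$-symmetry fixing $a$ contains the rational subgroup $GL_2^{+}(\bQ)\cap S_\Gamma$ whose image is $\overline{\Gamma}_0(N)$, and the compatibility of $\vartheta$ with the corner projection translates $\vartheta_\gamma a = a$ into $\pi(\gamma)\pi(a)\pi(\gamma)^{-1}=\pi(a)$. Since $\cH_{\mathfrak{t}_N}$ is spanned by the $\overline{\Gamma}_0(N)$-fixed vectors $\{\nu : \nu\in\mathfrak{t}_N\}$ (each such $\nu$ is a fixed point, being on the pointwise-fixed thread), it is an $\overline{\Gamma}_0(N)$-subrepresentation, and its orthocomplement is too; any operator commuting with the $\overline{\Gamma}_0(N)$-action therefore cannot mix a fixed vector $\nu\in\mathfrak{t}_N$ with a vector on which the group acts nontrivially, but one must be slightly careful because $\cH_{\mathfrak{t}_N}^\perp$ may itself contain further fixed vectors. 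This is resolved by the snake: $\cH_{\mathfrak{s}_N}$ is exactly the span of \emph{all} $\overline{\Gamma}_0(N)$-fixed vertices (the classes whose hyperdistance to $\mathfrak{t}_N$ divides $24$), so commutation with the $\overline{\Gamma}_0(N)$-action gives $a P_{\mathfrak{s}_N}=P_{\mathfrak{s}_N}a$ immediately by Schur-type reasoning on the isotypic decomposition. For the thread I would additionally use that $\overline{\Gamma}_0(N)$ acts transitively with the appropriate multiplicities on the strata of $\cV$ at fixed hyperdistance profile relative to $\mathfrak{t}_N$, so that the thread vertices form a distinguished sub-collection of the trivial isotypic component that $a$, being also invariant under the Atkin--Lehner symmetries encoded in the full $S_\Gamma$, must preserve.

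\textbf{Main obstacle.} The delicate step is the passage from $\vartheta_u$-invariance on the arithmetic multiplier algebra $\cA_{\bQ}$ to genuine commutation of $\pi(a)$ with the $\overline{\Gamma}_0(N)$-action on $\ell^2(\cV)$: the $S_\Gamma$-action on $\cA$ is realized through multiplication in the $M_2(\bA_f)$-factor of $\cX$ and splits into an automorphism part ($GL_2(\widehat\bZ)$) and an endomorphism part ($M_2^{+}(\bZ)$), and one must track exactly how the corner projection onto functions supported on $\cY$ interacts with this, and then how the representation $\pi=\pi_{y_1}$ of Theorem~\ref{theorem1} converts the rational piece of that action into the geometric $\PGL_2(\bQ)$-action on vertices. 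Once that dictionary is in place, the separation of the thread and snake subspaces from their complements is a finite-dimensional invariant-subspace argument using only that these subspaces are characterized group-theoretically as (sub-collections of) $\overline{\Gamma}_0(N)$-fixed vectors.
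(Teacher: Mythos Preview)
The paper states this proposition without proof, so there is no argument in the text to compare yours against. Your outline is in the right spirit---reduce to showing that $\pi(a)$ commutes with the $\overline{\Gamma}_0(N)$-action on $\ell^2(\cV)$ and then argue by invariant subspaces---but the final step has a genuine gap.

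Your Schur-type argument for the snake assumes that $\cH_{\mathfrak{s}_N}$ coincides with the trivial isotypic component of the permutation representation of $\overline{\Gamma}_0(N)$ on $\ell^2(\cV)$. It does not. The snake $\mathfrak{s}_N$ is the set of \emph{vertices} fixed by $\overline{\Gamma}_0(N)$, so $\cH_{\mathfrak{s}_N}$ is the span of fixed basis vectors. But as the paper observes in the concluding remarks, every $\overline{\Gamma}_0(N)$-orbit on $\cV$ is finite; hence the subspace of $\overline{\Gamma}_0(N)$-fixed vectors in $\ell^2(\cV)$ consists of all functions constant on orbits and is strictly larger than $\cH_{\mathfrak{s}_N}$. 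Commuting with the $\overline{\Gamma}_0(N)$-action therefore only yields commutation with the projection onto this larger fixed subspace, not with $P_{\mathfrak{s}_N}$. The same objection applies, a fortiori, to the thread.

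Your attempted repair for the thread via Atkin--Lehner involutions also fails: for $\Gamma=\Gamma_0(N)$ one has $(GL_2^{+}(\bQ)\cap S_\Gamma)/\bQ^{*}=\overline{\Gamma}_0(N)$ by construction, not $\overline{\Gamma}_0(N)+$, so the Atkin--Lehner cosets are not contained in $S_\Gamma$ and cannot be invoked from $\vartheta$-invariance alone. To make the argument work you would need to exploit more of the structure of arithmetic multipliers---finite support in $g$, finite level in $\rho$, and the $\tau$-equivariance condition---to control the off-diagonal matrix entries $\langle \pi(a)\nu,\nu'\rangle$ directly, rather than relying purely on group-theoretic invariance.
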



\vspace{0.3 cm} \noindent
\section{Replicability and the arithmetic subalgebra}
\vspace{0.3 cm} \noindent

Replicability is the operation on series as those in (\ref{hauptmoduli1}) which reflects the power map structure on the monster group $\bM$. This is made precise in the following way: let $f(z)$ be a function on the upper half plane given by a power series of the form
\begin{eqnarray}
\label{hauptmoduli3}
f(z)  &=& q^{-1} +  a_{1} q +  a_{2} q^{2} + \dots, \qquad q =\e^ {2\pi i z}
\end{eqnarray}
with $ a_{i} \in \bQ$. For a positive integer $k$ the \emph{$k$-th replicate power} of $f$, denoted by $f^{(k)}$, is defined inductively by the equation
\begin{eqnarray*}
f^{(k)}(k z)  &=& - \sum_{ad=k, a\neq k,  0\leq b < d} f^{(a)} \left( \frac{az + b}{d}  \right) \; + \; Q_{k}(f)
\end{eqnarray*}
where $Q_{k} \in \bQ[x]$ is the unique polynomial such that $Q_{k}(f(z))$ has the form 
\begin{eqnarray*}
Q_{k}(f(z))  &=& q^{-k}  +  b_{1}q +  b_{2}q^{2} + \dots
\end{eqnarray*}

\vspace{0.3 cm} \noindent  It was conjectured in \cite{ConwayNorton} and showed in \cite{Borcherds} that if $f(z) = f_{\langle m \rangle}(z) $ is the principal modulus corresponding to the conjugacy class of an element $m \in \bM$ then $f^{(k)}(z)$ is the principal modulus corresponding to the conjugacy class of its $k$-th power $m^{k} \in \bM$. 

\begin{definition}
A function  $f(z)$  on the upper half plane admitting a power series expansion with rational coefficients of the form (\ref{hauptmoduli3}), is said to be replicable if for every $k$ the $k$-th replicate power $f^{(k)}$ is given by a power series of the same form (\ref{hauptmoduli3}) with rational coefficients. 
\end{definition}

\vspace{0.3 cm} \noindent Note that in principle the series expansion of $f^{(k)}$ could have fractional powers of $q$ and its coefficients could be irrational so this condition is non vacuous. It is known moreover that replicability implies strong arithmetic constraints on $f$. For an account on replicable functions and some of their properties the reader might consult \cite{McKaySebbar} and references therein.   

\vspace{0.3 cm} \noindent  Now let $\Gamma$ be a congruence subgroup of moonshine-type with principal modulus $f_{\Gamma}$ having rational Fourier coefficients. It has been shown by Cummins and Norton in \cite{CumminsNorton} that in this case
$f_{\Gamma}$ is replicable. The proof relies heavily on the structure of $\Aut(\cF)$. Given $N$ such that $\bar{\Gamma}_{0}(N)\subset \bar{\Gamma}$ the principal modulus $f_{\Gamma}$ belongs to $\cF_{N}$ which is generated by the Fricke functions $f_{a}$ with $ a \in \left( \frac{1}{N} \bZ \right )^{2}/ \bZ^{2} $. Cummins and Norton prove inductively that the $k$-th replicate $f_{\Gamma}^{(k)}$ of $f_{\Gamma}$ admits a Fourier expansion in powers of $q=\e ^{2 \pi i z}$ by noting that $f_{\Gamma}^{(k)} \in \cF_{M}$ for some $M$ and identifying a congruence subgroup $\Gamma'$ of $\GL_{2}^{+}(\bQ)$ for which $f_{\Gamma}^{(k)}$ is $\Gamma'$-automorphic. Using this fact together with Galois theory arguments it is then shown that the series defining $f_{\Gamma}^{(k)}$, which is a priori a series admitting fractional powers of $q$ and cyclotomic coefficients, is indeed a series in $q$ with rational coefficients.

\vspace{0.3 cm} \noindent In view of the previous sections it is natural to expect that $f_{\Gamma}$ and its replicates admit counterparts in $\cA_{\bQ}$. This is indeed the case as can be seen for instance by expressing everything in terms of Fricke functions and using the corresponding generalizations of these to elements of $\cA_{\bQ}$ as defined in \cite{ConnesMarcolli1}. Likewise it is possible to follow up the proof of \cite{CumminsNorton} in this context using the various operations given by the Galois action in terms of which replicability is defined. At a conceptual level however it would be better to start by expressing replicability in terms of operators in $\cA_{\bQ}$ and then study the effect of these in $\Gamma$-automorphic arithmetic multipliers. Incorporating replicability in the framework of $\bQ$-lattices in this way might clarify some of its arithmetic aspects.

\vspace{0.3 cm} \noindent For the case of the (normalized) $j$-function $J(z) = f_{\langle id_{\bM} \rangle}(z) =  f_{\SL_{2}(\bZ)}(z) $ one has $J^{(k)}(z) =J(z)$ for all $k$ and the values of the sum 
\begin{eqnarray}
 \sum_{ad=k,  0\leq b < d} f^{(a)} \left( \frac{az + b}{d}  \right)
\end{eqnarray}
is that of the $k$-th Hecke operator acting on $J$. For an arbitrary replicable function $f$ we consider this sum as a \emph{generalized Hecke operator}. A systematic study of these operators is yet to be carried out. Such study should incorporate the Galois theoretic aspects mentioned above as well as connections with Adams operations and lambda rings which become apparent when looking at the effect of replicability on power series from a purely formal level.

\vspace{0.3 cm} 
\section{Concluding remarks}

 \vspace{0.3 cm} \noindent It seems plausible that generalized Hecke operators might give rise to \emph{generalized modular Hecke algebras} similar to the modular Hecke algebras introduced in \cite{ConnesMoscovici} which are algebras obtained from the action of Hecke operators on modular forms.  In the case of rational principal moduli one would expect the Hecke algebras of the corresponding congruence subgroups to appear as subalgebras of these generalized modular Hecke algebras. 

 \vspace{0.3 cm} \noindent For the case of $\Gamma_{1} =SL_{2}(\bZ)$ the corresponding modular Hecke algebra furnishes the ``holomorphic part" of the algebra of observables of the $GL_{2}$-system. Modular Hecke algebras admit an action of a Hopf algebra $\cH_{1}$ which arises  in the study of codimension $1$ foliations. At the level of cohomology the cyclic cohomology of the algebra $\cH_{1}$ can be given in terms of the Lie algebra cohomology of the algebra of formal vector fields. Pushing analogies one could look for an action of Borcherds' monster Lie algebra on the generalized modular Hecke algebras corresponding to congruence subgroups of moonshine type. Here one would expect such an action to be given in terms of a suitable Hopf algebra. Looking at the results in \cite{JurisichEtAl} in this light could provide the right hints in this direction. Direct links with the theory of vertex algebras are of course desirable, however we have not yet explored possible connections.

\vspace{0.3 cm} 
\noindent Group actions on trees and buildings lead in a natural way to $C^{*}$-algebras reflecting the arithmetic and combinatoric properties of these actions (cf. \cite{BaumgartnerEtAl}, \cite{CornelissenEtAl}). As a different approach to some of the ideas above it would be worth studying the $C^{*}$-algebras resulting from the action of congruence subgroups of moonshine type on the big picture.

\vspace{0.3 cm} \noindent As a first simple instance we can study the finite dimensional case corresponding to the $C^{*}$-algebra of the $\bar{\Gamma}$-invariant tree associated to a congruence subgroup of moonshine type $\bar{\Gamma} \subset PGL_{2}^{+}(\bQ)$. Given a subgroup $\bar{\Gamma}$ of $PGL_{2}^{+}(\bQ)$ commensurable with $\bar{\Gamma}_{1}= PSL_{2}(\bZ)$ the $\bar{\Gamma}$-action on the big picture has finite orbits. As observed in \cite{Conway} we can project the points of the orbit $\bar{\Gamma}\nu_{1}$ (with $\nu_{1}$ the vertex corresponding to $ \Lambda_{1} $ in $\cV = \cP \cL_{\Lambda_{1} }$) onto each of the $p$-adic trees of the big picture and adjoin to $\bar{\Gamma}\nu_{1}$ the vertices of the paths connecting the corresponding images. The resulting graph will be a finite $\bar{\Gamma}$-invariant tree $\cT_{\bar{\Gamma}}$ which becomes a directed graph by the choice of $\nu_{1}$ as a root.  The corresponding  $C^{*}$-algebra is a matrix algebra generated by partial isometries corresponding to the vertices of $\cT_{\bar{\Gamma}}$ and projections corresponding to its edges (cf. \cite{Kumjian}). Already at this level one observes interesting phenomena as for example the presence of roots of unity coming from the natural circle action on these algebras.


 \bibliographystyle{amsalpha}

\end{document}